\newtheorem{theorem}{Theorem} 
\newtheorem{lemma}{Lemma} 
\newtheorem{claim}{Claim}
\newtheorem{definition}{Definition}
\newcommand{\RR}{\mathbb{R}}
\newcommand{\udg}{{\textsc{UDG}}\xspace}
\newcommand{\fl}{{\textsc{Facility Location}}\xspace}
\newcommand{\pfl}{{\textsc{Prize Collecting Facility Location}}\xspace}
\newcommand{\eps}{\epsilon}
\newcommand{\conn}{{\rm conn}}
\newcommand{\open}{{\rm open}}
\newcommand{\cost}{{\rm cost}}
\newcommand{\opt}{\rm opt}
\DeclareMathOperator{\poly}{poly}
\title{A QPTAS for Facility Location on Unit Disk graphs}
\author{Zachary Friggstad \and Mohsen Rezapour \and Mohammad R. Salavatipour \and Hao Sun\\
Department of Computing Science\\
University of Alberta}
\date{}
\begin{document}

\maketitle
\thispagestyle{empty}
\begin{abstract}
We study the classic \textsc{(Uncapacitated) Facility Location} problem on Unit Disk Graphs (UDGs).
For a given point set $P$ in the plane, the unit disk graph UDG(P) on $P$ has vertex set $P$ and an edge between two distinct points $p, q \in P$ if and only if their Euclidean distance $|pq|$ is at most 1. The weight of the edge $pq$ is equal to their distance $|pq|$.
An instance of \fl on UDG(P) consists of a set $C\subseteq P$ of clients and a set $F\subseteq P$ of facilities, each having an opening cost $f_i$. The goal is to pick a subset $F'\subseteq F$ to open while minimizing
$\sum_{i\in F'} f_i + \sum_{v\in C} d(v,F')$, where $d(v,F')$
is the distance of $v$ to nearest facility in $F'$ through UDG(P).

In this paper, we present the first Quasi-Polynomial Time Approximation Schemes (QPTAS) for the problem. While approximation schemes are well-established for facility location problems on sparse geometric graphs (such as planar graphs), there is a lack of such results for dense graphs. Specifically, prior to this study, to the best of our knowledge, there was no approximation scheme for any facility location problem on UDGs in the general setting.

\end{abstract}

\newpage
\setcounter{page}{1}

\section{Introduction}
Unit-disk graphs (UDGs) are a well-studied class of graphs due to their extensive applications in modeling ad-hoc communication and wireless sensor networks; see for example \cite{akuhn2005local,cabello2015shortest,li2005efficient,gao2003well,kaplan2018routing,yan2012compact,van2005approximation}.
UDGs are defined as intersection graphs of a collection of unit-diamater disks in the two-dimensional plane. Specifically, each \udg represents a set of $n$ unit disks as vertices, with each vertex corresponding to one unit disk.
An edge exists between two vertices/points $p, q$ if and only if their Euclidean distance is at most $1$ (equivalently, the unit-diameter balls around $p$ and $q$ intersect) and the weight or length of each such edge is given by their by the Euclidean distance between the corresponding vertices.

Formally, for a given point set $P$ in the plane, the unit disk graph representation of these points, denoted as UDG(P), is a graph $G = (V, E)$ with the vertex set $V$, where each vertex corresponds to 
a point in $P$. The edge set $E$ consists of edges between points $p$ and $q$ if and only if their Euclidean distance, denoted as $|pq|$, is at most 1. The weight of the edge $pq \in E$ is equal to their distance $|pq|$.
For a given subset $S \subseteq V$, we define the (weak) diameter of $S$ as $\textbf{diam}(S) = \max_{x,y} d_G(x,y)$, where $d_G(x,y)$ represents the minimum weight of a path between vertices $x$ and $y$ in $G$ (we assume $G$ is connected as we may solve facility location for each connected component).

For many optimization problems, approximation schemes are known when the input graph is a \udg (e.g. maximum independent set, minimum dominating set, minimum clique-partition \cite{Matsui,Hunt,Cheng,Imran}). Some of the techniques for designing PTAS's for optimization problems on UDGs (e.g. maximum independent set) involves partitioning the input into regions of bounded size (at a small loss, e.g. ignoring points that touch the boundary of the partitions)
and then solving the problem on such instances using exhaustive search
and/or dynamic programming which leverage Euclidean distance properties. This shifting strategy was introduced by \cite{HM85}.


We study the \fl problem on UDGs. An instance $I=(G,C,F)$ of \fl consists of an edge-weighted graph $G$, where the edges satisfy the metric property, a set $C\subseteq V$ of clients, and a set $F\subseteq V$ of facilities, each having an opening cost $f_i \in \RR^+ $. The goal is to pick a subset $F'\subseteq F$ to open to minimize
$\sum_{i\in F'} f_i + \sum_{v\in C} d(v,F')$, where $d(v,F')$
is the distance of $v$ to nearest facility in $F'$.
\fl has been studied extensively and the best known upper and lower bounds for it are 1.488 \cite{li20131} and 1.46 \cite{guha1999greedy}, respectively. Approximation schemes are known for \fl when the metric is Euclidean \cite{arora1998approximation} or when $G$ is a planar graph \cite{cohen2019polynomial}. Additionally, Cohen et al. \cite{cohen2019local} developed approximation schemes for the ``uniform'' (that is all facilities cost 1 to open) facility location problem in minor-free graphs. To the best of our knowledge, no approximation scheme was known for any facility location type problem on UDGs.


Our main result of this paper is the following:

\begin{theorem}\label{thm:ptas_udg}
  There is an algorithm that, given an instance of \fl in \udg and $\epsilon>0$, finds a $(1+\epsilon)$-approximate solution in time $n^{O_{\epsilon}(\log n)}$, where the constant in $O_\eps(.)$ is $\eps^{-O(\eps^{-2})}$.
\end{theorem}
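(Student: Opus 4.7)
The plan is to design a DP over a geometric quadtree decomposition of the input point set, following the outline of Arora's QPTAS for Euclidean problems but adapted to the \udg metric. First, observe that since $G$ is connected with edge weights at most $1$, the diameter of $G$ is at most $n-1$; hence all points of $P$ lie in an axis-aligned bounding box of side $L = O(n)$. I would apply a random shift $(a,b)$ uniform in $[0,L]^2$ and build a quadtree of depth $\lceil \log L\rceil = O(\log n)$, so that every cell at level $k$ has side length $\Theta(L/2^k)$.

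Next, I would place $m = \Theta(\eps^{-1} \log n)$ evenly spaced \emph{portals} on each quadtree edge. Call a solution \emph{portal-respecting} if for every client $v$ served by $i \in F'$, the shortest path from $v$ to $i$ in UDG$(P)$, viewed as a polygonal curve in the plane, crosses each quadtree edge within distance $O((L/2^k)/m)$ of some portal. The central structural lemma I would prove is that the expected cost of modifying any solution into a portal-respecting one is at most $\eps \cdot \OPT$, the expectation taken over the random shift. The argument is the standard Arora-style charging: each crossing of a level-$k$ cell edge incurs rerouting cost $O((L/2^k)/m)$; under the random shift, a path of vertical/horizontal extent $\ell$ contributes to level-$k$ horizontal lines with probability proportional to $\ell/(L/2^k)$, and summing over all levels, all paths, and both orientations telescopes to $\eps \cdot \OPT$. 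The twist relative to Euclidean TSP is that each path is a sequence of UDG edges rather than a straight segment; since every edge has length at most $1$, each boundary crossing sits within Euclidean distance $1$ of a true UDG vertex, so the ``reroute through the nearest portal'' step can be realized by a short detour through nearby UDG vertices, and the additive $O(1)$ per crossing is absorbed into the $O((L/2^k)/m)$ term provided the cells at the finest relevant level are larger than a constant (cells smaller than that can be handled by brute force).

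Given the structural lemma, the algorithm is a bottom-up DP on the quadtree. For each cell $C$ and each \emph{interface configuration} specifying, for each portal on $\partial C$, (i) a representative exterior facility reached through that portal (or $\bot$), (ii) its distance to the portal rounded to a power of $1+\eps$, and (iii) whether the portal carries ``client-out'' or ``facility-out'' traffic, the DP stores the minimum cost of serving all clients in $C$ consistent with the configuration. A cell's table is built by merging its four children's tables, constrained so that the facility-identity and rounded-distance fields agree across the shared portal grid, and facility opening costs are counted in the unique cell containing each opened facility. With $m = O_\eps(\log n)$ portals per cell and $\tilde O_\eps(n)$ options per portal, the state space per cell is $n^{O_\eps(\log n)}$, giving total running time $n^{O_\eps(\log n)}$.

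The main obstacle will be controlling the DP state so that consistency across cell boundaries is enforceable while retaining the $(1+\eps)$-approximation. Two specific difficulties I expect to address carefully. \textbf{Facility-identity tagging:} a portal on $\partial C$ must be labeled with a specific facility so that a sibling cell that ``produces'' this facility is charged its opening cost only once; a too-coarse label breaks accounting, but a fully-precise label blows up the state. I would handle this by enumerating labels from a candidate list of $\poly(n)$ facilities per portal while rounding only the \emph{distance} field. \textbf{UDG-metric feasibility:} a client-facility assignment recorded in the interface must correspond to an actual UDG path, not just a Euclidean route. For this I would rely on the observation above — any portal lies within Euclidean distance $O(1)$ of a UDG vertex — so the ``virtual'' portal-respecting path can always be realized by a true UDG path of cost within $(1+\eps)$ of the recorded one, once the finest quadtree level is chosen appropriately. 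Derandomization of the shift is standard: try all $O((L/\eps)^2) = n^{O(1)}$ grid positions and return the best.
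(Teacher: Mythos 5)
Your quadtree approach has a genuine gap at its core: the Arora-style ``reroute to the nearest portal'' step is not valid in the UDG metric. In the Euclidean setting, if a path crosses a cell boundary at a point $p$ whose nearest portal $q$ satisfies $|pq| = d$, the detour costs $O(d)$ by the triangle inequality in the plane. In a UDG, the weight of a path equals its Euclidean length as a polygonal curve, so the \emph{charging} part of your argument (expected crossing frequency times Euclidean detour distance) does go through. But the \emph{realization} part does not: the portal $q$ is a geometric point, not necessarily a vertex, and even if a vertex $q'$ lies near $q$, the UDG graph distance from the crossing edge's endpoints to $q'$ can be arbitrarily larger than $|pq|$ --- or there may be no UDG path at all. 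UDGs are graphs, not continua: a quadtree line slices through the plane without respecting which detours are actually feasible in $G$. Your claim that ``the reroute through the nearest portal step can be realized by a short detour through nearby UDG vertices'' is false in general, and the assertion that this is ``absorbed into the $O((L/2^k)/m)$ term'' is where the structural lemma breaks.

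The paper avoids exactly this pitfall by using a graph-theoretic separator rather than a geometric one. It first applies the Cohen-Addad--Klein--Mathieu reductions to obtain ``structured'' instances with bounded aspect ratio and minimum client--facility distance $N$, and then uses a BFS-chopping decomposition (a consequence of Lee's generalization of Klein--Plotkin--Rao to region intersection graphs) to cut into pieces of weak diameter $O(rN/\eps^2)$. Inside each piece it builds a hierarchical decomposition not from a quadtree but from the balanced separator of Yan et al.\ (Theorem~\ref{YanLemma}), whose separator consists of two UDG shortest paths rooted at a common vertex $s$. Portals are placed \emph{on these paths}, so every portal is a genuine vertex and detouring to the nearest portal along the separator costs at most $\delta$ in the graph metric by construction. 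The residual additive $+O(1)$ per crossing (coming from edges whose endpoints are merely near the separator, a side effect of Yan's separator being the $3$-neighborhood of the paths) is only tolerable because the preprocessing guarantees $N$ is large; when $N$ is small the paper falls back to a separate bounded-region PTAS (Theorem~\ref{thm:simple_ptas}). None of this machinery appears in your proposal, and without a separator that lives inside the graph there is no way to control the UDG cost of portal detours, so the DP cannot be shown to return a $(1+\eps)$-approximate solution.
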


In order to prove Theorem \ref{thm:ptas_udg} we combine ideas from \cite{cohen2019polynomial} with a low-diameter decomposition
for UDGs that follows from \cite{Lee17,KPRLee} plus a new dissection procedure that is obtained by finding a proper balanced separator for UDGs. This allows us (at a small loss) to break
the problem into independent instances and use dynamic programming to combine the solutions to obtain the solution for the original instance. There are many details on how to put these pieces together carefully so as to bound the overall error.

\section{Preliminaries}
For planar graphs and, more generally, graphs that exclude $K_{r,r}$ as a minor for some fixed $r$, Klein-Plotkin-Rao \cite{klein1993excluded} showed a decomposition of the input graph into low diameter parts by removing a small fraction of edges. More specifically, given a graph $G$ with $n$ nodes and $m$ edges that excludes $K_{r,r}$ as a minor, one can remove $O(mr/\delta)$ edges so that the
(weak) diameter of each remaining component is at most $O(r^2\delta)$. The general idea was based on chopping breadth-first search (BFS) trees (i.e. shortest-path trees in the unweighted version of the graph): suppose one constructs a BFS tree from some root node and then cut the edges at level $i\cdot\delta+r$ for $i\geq 1$ where $r\leq\delta$ is a random offset. Then repeat this procedure on each of the connected components, for $O(r)$ iterations. Then the resulting components have $O(r^2\delta)$ weak diameter. This result was further improved in \cite{fakcharoenphol2003improved,abraham2014cops}, to show for each graph without $K_r$ as a minor there is a probabilistic decomposition into $O(r\delta)$ (weak) diameter components by removing $O(mr/\delta)$ edges.
Lee \cite{Lee17,KPRLee} generalized this by introducing region intersection graphs, which includes UDGs as a special case, and showed that one can obtain similar decomposition results for such graphs. 
Theorem 4.2 in \cite{Lee17} implies that a similar BFS chopping procedure applied to UDGs for a constant number of iterations results in graphs of bounded (weak) diameter.  We describe this chopping procedure a bit more formally.

\begin{definition}[$\delta$-chopping operation]\label{tau-chopping-operation}
For any connected graph $G$ and any number $\delta \geq 1$, we define the $\delta$-chopping operation on $G$ as follows.
Choose any node $x_0$ from $V(G)$, select a random integer $0 \leq r_0 \leq \delta$, and then compute a BFS tree from $x_0$.
Partition $V(G)$ into annuli $A_0, A_1, A_2, \ldots$, where $A_0 = \{ v \in V(G): d'(x_0, v) < r_0\}$ and annulus $A_j$ for $j \geq 1$ is defined as:
$A_j = \{ v \in V(G): r_0 + (j-1)\delta \leq d'(x_0, v) < r_0 + \delta j \}$, where $d'(x_0,v)$ is the number of edges on the BFS tree path from $x_0$ to $v$.

\end{definition}
So there is an offset $r_0$ that cuts only a $1/\delta$-fraction of edges.
Earlier works on minor free graphs \cite{KPRorig} imply that if $G$ is $k_r$-minor free then $O(r)$ iterations of this chopping procedure yields components with (weak) diameter
at most $O(\poly(r)\delta)$. 
The following follows as a corollary of Theorem 4.2
of \cite{Lee17}.

\begin{theorem}[\cite{Lee17}]\label{ActualKPR} 
    $O(1)$ iterations of the $\delta$-chopping iteration applied to a UDG results in a graph of weak diameter $O(\delta)$.  
\end{theorem}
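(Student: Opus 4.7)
The plan is to derive this as a direct corollary of Theorem~4.2 of \cite{Lee17} for region intersection graphs over planar ground graphs, and the proof splits naturally into three steps.

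First, I would exhibit UDG(P) as a region intersection graph (rig) over a planar graph. For each point $p \in P$, let $D_p$ be the closed Euclidean disk of diameter one centered at $p$; by definition $pq$ is an edge of UDG(P) iff $D_p \cap D_q \neq \emptyset$. Take $G_0$ to be the planar arrangement of the boundaries of $\{D_p\}_{p\in P}$, enriched with one interior vertex per disk, and let $R_p$ be the portion of $G_0$ lying inside $D_p$. Each $R_p$ is a connected subgraph of $G_0$, and $R_p \cap R_q \neq \emptyset$ iff $D_p \cap D_q \neq \emptyset$. Thus UDG(P) is exactly the region intersection graph of $\{R_p\}_{p\in P}$ over the planar ground graph $G_0$.

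Second, I would apply Theorem~4.2 of \cite{Lee17}. Because $G_0$ is planar, it excludes $K_5$ as a minor, so Lee's theorem guarantees that a constant number of iterations of the BFS-based chopping procedure on the rig produces components of weak diameter $O(\delta)$, with the iteration count depending only on the excluded-minor size (a universal constant in the planar case). The chopping used by Lee matches (up to notation) the $\delta$-chopping of Definition~\ref{tau-chopping-operation}: both choose a random integer offset $r_0 \in [0,\delta]$, grow a BFS from a root, and cut edges between consecutive annuli of width $\delta$.

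Third, I would reconcile the unweighted chopping metric $d'$ of Definition~\ref{tau-chopping-operation} with the weighted weak diameter asserted in the theorem statement. Since every edge of UDG(P) has Euclidean weight at most $1$, any pair of vertices within a component that are connected by a hop-path in $G$ of length $\leq c\delta$ also satisfy $d_G \leq c\delta$. Hence an $O(\delta)$ weak diameter bound in the unweighted BFS metric immediately yields an $O(\delta)$ weak diameter bound in the weighted UDG, which is what the statement requires.

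The main obstacle I expect is verifying that the chopping performed \emph{directly} on UDG(P), as in Definition~\ref{tau-chopping-operation}, fits cleanly into Lee's framework, since Lee's decomposition is most naturally stated in terms of the ground graph $G_0$ rather than the rig itself; one must check that BFS in the rig can be dominated by BFS in $G_0$ so that Lee's hypotheses still apply. A secondary bookkeeping issue is extracting the precise constant number of iterations needed from Lee's proof, which reduces to tracking the dependence of his bound on the excluded-minor size for the planar case.
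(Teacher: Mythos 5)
Your proposal matches the paper's approach exactly: the paper also treats this as a direct corollary of Theorem~4.2 of \cite{Lee17}, invoking the fact that UDGs are region intersection graphs over planar ground graphs, and supplies no further argument. Your fleshed-out details (the explicit rig construction via disk boundaries and the observation that a hop-distance bound of $O(\delta)$ implies a weighted weak-diameter bound of $O(\delta)$ because every UDG edge has weight at most $1$) are correct and fill in exactly what the paper leaves implicit; your worry about reconciling BFS on the rig with BFS on the ground graph is not an obstacle, since Lee's Theorem~4.2 is stated for the chopping procedure performed on the rig itself.
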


To prove Theorem \ref{thm:ptas_udg} we also rely on the following result for the special case when the instance is in a bounded size region. This will be proved later in Section \ref{sec:ptas} using different techniques than we just discussed.

\begin{theorem}\label{thm:simple_ptas}
There exists a PTAS for \fl in \udg when the point set $P$ is contained within a bounding box of constant size $L=O(1)$ in the plane.
\end{theorem}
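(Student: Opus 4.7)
The plan is to prove Theorem \ref{thm:simple_ptas} by adapting Arora's randomly-shifted dissection framework from the Euclidean approximation scheme literature to the UDG metric. Since the bounding box has constant size $L = O(1)$, a dyadic quadtree decomposition down to cells of side $\delta = \Theta(\eps)$ has only $D = O(\log(1/\eps))$ levels and $\poly(1/\eps)$ cells, so an expensive dynamic program on the quadtree will still run in polynomial time.

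The first step is a geometric preprocessing: I would partition the box into a grid of spacing $\Theta(\eps)$ and, within each grid cell, collapse the facilities to a small ``canonical'' set (the cheapest facility in the cell, together with any facility that is colocated with, or very close to, some client in $C$) and aggregate the clients of the cell to a representative.  A charging argument adapted to UDG distances shows that this preprocessing changes the optimum by at most a $(1+O(\eps))$ multiplicative factor. The delicate case here is clients whose optimal service cost is $o(\eps)$: naive snapping could blow up such near-zero service costs additively, so we must keep the facilities that the optimum uses to serve them available rather than replacing them by the cell-cheapest.

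Next, I would overlay a randomly-shifted dyadic quadtree on the box and, on the boundary of each dissection cell, place a \emph{portal set} of $O(1/\eps)$ points, each chosen as an actual vertex of $P$ lying close to the boundary.  The key portal lemma I would establish says: every shortest UDG path crossing a level-$i$ boundary can be rerouted through some portal at additive cost $O(\eps \cdot L \cdot 2^{-i})$. Combined with the standard random-shift averaging (a path of length $\ell$ crosses a level-$i$ boundary with probability $O(\ell \cdot 2^i / L)$), this yields expected multiplicative distortion $(1+O(\eps))$ summed over all $D$ levels.  A bottom-up DP on the quadtree then computes, for each cell $\sigma$, the minimum cost of a partial solution parameterized by which portals on $\partial \sigma$ carry cross-boundary service and, for each such portal, the identity of the facility on the other side; the state space is $n^{O(1/\eps)}$ per cell, so the whole DP runs in time $n^{O_\eps(1)}$. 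The random shift is then derandomized by enumerating all $\poly(1/\eps)$ grid offsets.

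The main obstacle will be the portal lemma for UDG.  In Euclidean metrics portals are arbitrary geometric points on a cell boundary through which paths can straighten at small additive cost.  In UDG, paths are constrained to follow graph edges between actual vertices of $P$, so portals must be chosen from $P$ near the boundary, and one must show both that enough such points exist (via a packing/density argument on $P$ in a thin boundary strip, using the unit-disk structure) and that any shortest path crossing the boundary can be rerouted through a nearby portal at additive loss $O(\eps \cdot L \cdot 2^{-i})$.  Reconciling this combinatorial rerouting with the random-shift averaging, so that the total multiplicative blowup summed over all $D$ dissection levels remains $(1+O(\eps))$, is the most delicate part of the argument; everything else is a mechanical adaptation of the standard Arora DP.
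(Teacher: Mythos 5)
Your proposal takes a genuinely different route from the paper, and it contains a real unresolved gap. The paper sidesteps the entire quadtree/portal machinery by exploiting a much simpler observation: if one partitions the bounding box using a randomly shifted grid of side $1/2$ (a single level, not a hierarchical dissection), then any two points lying in the same cell are at Euclidean distance at most $\sqrt{2}/2 < 1$ and hence \emph{directly adjacent} in the UDG; inside each cell the UDG metric therefore coincides exactly with the Euclidean metric. The paper then (i) guesses, by enumeration over $n^{O(L/\eps^2)}$ candidates, an $\eps$-net $F'$ of the optimal facility set (greedily chosen so that every facility of $D^*$ is within distance $\eps$ of $F'$), (ii) turns each grid cell into a Euclidean \pfl instance in which a client's penalty is its UDG distance to $F'$, and (iii) invokes the known Euclidean PTAS for prize-collecting facility location inside each cell. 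A client cut by the grid --- which happens with probability at most $2c^*_j$ --- pays the penalty, which is charged against $c^*_j + \eps$ via the net. No portal lemma, no quadtree DP, and no assumption about the density of $P$ near cell boundaries is needed.

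Your route, by contrast, hinges on a UDG analogue of Arora's portal lemma, which you correctly flag as the main obstacle but do not actually establish; in its present form this is a genuine hole, not a routine adaptation. The difficulty is substantive: portals must be real vertices of $P$, and you must show that a shortest UDG path crossing a dissection boundary can be rerouted through one of only $O(1/\eps)$ portals at additive cost $O(\eps L 2^{-i})$. This requires both a packing argument (that an $O(\eps L 2^{-i})$-separated subset of $P$ in a thin boundary strip has bounded size) and a rerouting argument (that the detour to the nearest portal stays within the UDG and costs at most the portal spacing \emph{plus} the length of the crossing edge, which is up to $1$ and does not shrink with the level $i$). Because that $O(1)$-per-crossing term does not scale with cell size, the usual level-sum telescoping argument that gives $(1+O(\eps))$ distortion does not go through directly; one would need an extra idea to control it. The paper avoids all of this by fixing the cell size at $1/2$ so that no rerouting is ever needed: intra-cell connections are literally Euclidean, and cross-cell connections are replaced by the penalty/net mechanism rather than by portal rerouting. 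If you want to salvage your approach, you would at minimum need to state and prove the UDG portal lemma and handle the non-vanishing crossing cost; but given the much shorter argument available, the quadtree route is not the economical one here.
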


Our algorithm for Theorem \ref{thm:ptas_udg} starts with some preliminary steps of algorithm of 
\cite{cohen2019polynomial} that presented a PTAS for \fl on planar metrics. Those preliminary steps in fact are valid for general metrics (they do not use planarity in those initial steps) and reduce the problem to instances with certain structures that we will start from.
For this reason, we briefly outline the main steps of their algorithm.
Their initial steps reduce the problem to instances with certain structural properties and their proof works for general metrics
(not just planar ones). Hence, the same initial reductions work in our
setting as well.

\subsection{Starting point: the PTAS for \fl on planar graphs \cite{cohen2019polynomial}}\label{planar_fl}
Given an instance $I=(G,C,F,f_i)$ of \fl, which consists of  an edge-weighted graph $G$, a set of clients $C$, and a set of facilities $F$ with opening costs $f_i$ (for each $i \in F$),
the first step of their algorithm involves partitioning the instance into separate (independent) sub-instances with specific structural properties. For any solution $D\subseteq F$, we denote by $\conn(D)$
the connection cost of $D$ ($\sum_{c\in C} dist(c,D)$) and by $\open(D)$
the opening cost of facilities in $D$ ($\sum_{i\in D}f_i$), and $\cost(D)=\conn(D)+\open(D)$. We sometimes use $\cost_I(D)$
to denote we refer to the cost of $D$ for instance $I$.
To achieve this, they compute an $\alpha$-approximation solution $\tilde{D}$ (where $\alpha=O(1)$) to a modified instance $\tilde{I}=(G,C,F,\epsilon f_i)$ where each opening cost is scaled down by a factor of $\epsilon$. 
In other words, $\tilde{D}$ is an $O(1)$-approximation for $\tilde{I}$. It is not hard to see that $\tilde{D}$ is also
an $O(1/\epsilon)$-approximation for $I$.
For any $i \in \tilde{D}$, let $cluster(i)$ denote the set of clients connected to $i$ in this solution and define 
$avgcost(i)= \frac{f_i + \sum_{j \in cluster(i)}d_{G}(j,i)}{|cluster(i)|}$ be the average cost of the cluster by facility $i$. 
Suppose $D^*$ is the set of facilities in an optimum solution to $I$. So cost of $(\tilde{D})$ is at most $\frac{1}{\eps}$ times cost of $D^*$. They show:
\begin{lemma}[Corollary 5 \cite{cohen2019polynomial}]
$\forall f\in \tilde{D},\exists g\in D^*: dist(f,g)\leq 2\cdot avgcost(f)$.
\end{lemma}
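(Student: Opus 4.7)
The plan is to argue by contradiction using only the optimality of $D^*$ for $I$ together with the triangle inequality; the approximation guarantee of $\tilde{D}$ plays no role. Suppose, toward a contradiction, that some $f \in \tilde{D}$ has $d_G(f, g) > 2 \cdot \avgcost(f)$ for every $g \in D^*$, and write $C_f := \cluster(f)$.

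First I would turn the far-apart hypothesis into a lower bound on the connection cost that $D^*$ pays on $C_f$. For each $j \in C_f$, let $g_j \in D^*$ be a facility closest to $j$; the triangle inequality $d_G(f,g_j) \le d_G(f,j) + d_G(j,g_j)$ combined with the hypothesis gives $d_G(j, g_j) > 2\,\avgcost(f) - d_G(j, f)$. Summing this strict inequality over $j \in C_f$ and unpacking the identity $|C_f| \cdot \avgcost(f) = f_f + \sum_{j \in C_f} d_G(j, f)$ yields
\[
\sum_{j \in C_f} d_G(j, D^*) \;>\; 2 f_f + \sum_{j \in C_f} d_G(j, f).
\]

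Next I would consider the modified solution $D^* \cup \{f\}$ and show it is strictly cheaper than $D^*$ on $I$. Opening $f$ costs an extra $f_f$; for the connection cost, every client $c \notin C_f$ may stay at its current $D^*$-facility, and every $j \in C_f$ is reassigned to $f$ at cost $d_G(j, f)$. The displayed inequality then gives
\[
\cost_I(D^* \cup \{f\}) - \cost_I(D^*) \;\le\; f_f + \sum_{j \in C_f} d_G(j,f) - \sum_{j \in C_f} d_G(j, D^*) \;<\; f_f - 2 f_f \;=\; -f_f \;\le\; 0,
\]
contradicting optimality of $D^*$.

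The argument is essentially a single swap plus a summation, so there is no real obstacle; the points to watch are bookkeeping. First, the strict inequality for each client must survive the sum, which it does as long as $C_f \neq \emptyset$ (we may assume this by removing from $\tilde D$ any facility that serves no clients). Second, the contradiction must remain strict even when $f_f = 0$: this still holds because the connection-cost decrease is already strictly greater than $2 f_f$, leaving the total change strictly negative and the contradiction with optimality of $D^*$ genuine.
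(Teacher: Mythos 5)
Your proof is correct. The paper itself does not prove this statement---it imports it verbatim as Corollary 5 of \cite{cohen2019polynomial}---so there is no internal proof to compare against, but your argument is the standard one: assume $d_G(f,D^*)>2\,\avgcost(f)$, lower-bound $\sum_{j\in\cluster(f)}d_G(j,D^*)$ via the triangle inequality, unpack $|\cluster(f)|\cdot\avgcost(f)=f_f+\sum_{j\in\cluster(f)}d_G(j,f)$, and observe that adding $f$ to $D^*$ strictly reduces $\cost_I$, contradicting optimality. Your bookkeeping is right: the hypothesis $d_G(f,g)>2\,\avgcost(f)\ge 0$ already forces $f\notin D^*$ so the set genuinely changes, strictness survives the sum as long as $\cluster(f)\neq\emptyset$ (which is given, since the partition is into nonempty clusters), and the final inequality $<-f_f\le 0$ remains a strict contradiction even when $f_f=0$. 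It is also worth emphasizing, as you do, that the argument uses only the optimality of $D^*$ for $I$ and the definition of $\avgcost$ in terms of $\tilde D$'s clusters---the approximation quality of $\tilde D$ is irrelevant here; that only matters for the subsequent steps that bound the aspect ratio and the partition loss.
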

Then they build a modified instance $I'=(G,C',F,f_i)$ where clients of each $cluster(f)$ are not too close or too far away from $f$ compared to $avgcost(f)$. Let $\opt'$ be the cost of an optimum solution to $I'$ and $\opt$ be the cost of an optimum solution to $I$. They show that:

\begin{lemma}[Corollary 7 \cite{cohen2019polynomial}]
For any $R\subseteq F$ if $\cost_{I'}(R)\leq(1+\gamma)\opt'+\delta$
(for some $\gamma,\delta>0$) then $\cost_{I}(R)\leq(1+2\gamma+8\alpha\epsilon)\opt+\delta$
\end{lemma}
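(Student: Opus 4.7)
The plan is to sandwich $\cost_I(R)$ between $\cost_{I'}(R)$ and $\opt$, exploiting the fact that $I$ and $I'$ differ only in how the clients of each $\cluster(f)$ (for $f\in\tilde D$) are treated, and that this ``deformation'' is cheap in terms of $\opt$. Concretely, I would prove two inequalities: (a) $\opt'\le (1+O(\alpha\eps))\,\opt$, meaning an $I$-optimal facility set still serves the modified client set cheaply; and (b) $\cost_I(R)\le \cost_{I'}(R)+O(\alpha\eps)\,\opt$ for every $R\subseteq F$. Chaining these with the hypothesis $\cost_{I'}(R)\le (1+\gamma)\opt'+\delta$ gives $\cost_I(R)\le (1+\gamma)(1+O(\alpha\eps))\,\opt+\delta+O(\alpha\eps)\,\opt$, and expanding $1+\gamma+O((1+\gamma)\alpha\eps)+O(\alpha\eps)\le 1+2\gamma+8\alpha\eps$ (assuming $\gamma$ and $\alpha\eps$ are bounded by small enough constants) produces the stated bound.

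The quantitative heart of the argument is that the total ``deformation cost'' between $I$ and $I'$ is $O(\alpha\eps)\cdot\opt$. Since $I'$ only drops or relocates those clients in $\cluster(i)$ whose distance to $i$ is either below $\Theta(\eps)\cdot\avgcost(i)$ or above $\Theta(1/\eps)\cdot\avgcost(i)$, two complementary bounds apply. For the close clients, the per-client displacement is $O(\eps)\cdot\avgcost(i)$ and their number is at most $|\cluster(i)|$, so their total contribution is at most $O(\eps)\cdot(f_i+\sum_{j\in\cluster(i)}d_G(j,i))$, i.e.\ $O(\eps)$ times $i$'s local contribution to $\cost_{\tilde I}(\tilde D)$. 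For the far clients, a simple averaging argument on $\sum_{j\in\cluster(i)}d_G(j,i)\le |\cluster(i)|\cdot\avgcost(i)$ bounds their number by $O(\eps)\cdot|\cluster(i)|$, so their rerouting cost is again $O(\eps)$ times $i$'s local cost. Summing over $i\in\tilde D$ yields $O(\eps)\cdot\cost_{\tilde I}(\tilde D)$. Because $\tilde D$ is an $\alpha$-approximation in $\tilde I$ and $\opt(\tilde I)\le\opt$ (facility costs were scaled down by $\eps$), we have $\cost_{\tilde I}(\tilde D)\le \alpha\cdot\opt$, which gives the desired $O(\alpha\eps)\cdot\opt$ bound.

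To convert this deformation estimate into (a) and (b), I would use the triangle inequality to re-route clients. For (a), evaluate an $I$-optimal $D^*$ under $I'$ and pay for each modified client by its displacement, keeping $D^*$'s service pattern otherwise intact. For (b), for any $R\subseteq F$, serve each ``missing'' or ``moved'' client in $I$ by going through its snapped location (or through $\tilde D$ to the nearest facility of $R$), again paying only the displacement plus a bounded multiple of $\avgcost(i)$. The main obstacle is careful bookkeeping: one must check that every modified client is charged at most a constant number of times and that the displacements telescope into the $O(\alpha\eps)\cdot\opt$ budget with constants tight enough to land at exactly $8\alpha\eps$ rather than something larger. The entire calculation is metric and uses no planarity, which is why this lemma, inherited verbatim from \cite{cohen2019polynomial}, applies without change in the UDG setting.
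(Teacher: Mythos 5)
The paper does not prove this lemma at all---it is imported verbatim from Cohen-Addad, Pilipczuk, Pilipczuk \cite{cohen2019polynomial} (their Corollary~7), and the authors merely remark that the cited argument uses only the metric property and therefore transfers to the UDG setting. So there is no in-paper proof to compare against; what follows is an assessment of your sketch as a free-standing reconstruction.

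Your high-level plan is sound and matches the standard ``deformation'' argument: prove (a) $\opt'\leq(1+O(\alpha\eps))\opt$ by evaluating $D^*$ in $I'$, prove (b) a reverse comparison between $\cost_I(R)$ and $\cost_{I'}(R)$, and chain through the hypothesis. The budget $O(\alpha\eps)\opt$ coming from $\cost_{\tilde I}(\tilde D)\leq\alpha\,\opt(\tilde I)\leq\alpha\,\opt$ is exactly the right quantity, and your close-client accounting (total displacement at most $\eps^2$ times the local cluster cost) is correct.

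There are, however, two genuine gaps. First, your statement (b)---``$\cost_I(R)\leq\cost_{I'}(R)+O(\alpha\eps)\opt$ for every $R\subseteq F$''---is not true unconditionally if $I'$ discards any clients: for a poor choice of $R$ the reconnection cost of a dropped client is $d(j,R)$, which is unbounded in terms of $\opt$. Either $I'$ must be a pure perturbation with $C'=C$ (in which case $|\cost_I(R)-\cost_{I'}(R)|\leq\sum_j(\text{displacement of }j)$ and (b) does hold universally, but then the far clients cannot simply be removed), or (b) must be proved only for $R$ satisfying the lemma's hypothesis, using the fact that such an $R$ places a facility near enough to each surviving cluster. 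Your sketch does not distinguish these cases, and the chaining step you write implicitly uses the unconditional version.

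Second, the far-client bookkeeping is under-justified. Markov gives at most $\eps^2|\cluster(i)|$ far clients per cluster, but the rerouting cost you ascribe to each of them has two pieces: the leg $d(j,i)$, whose sum over far clients is only bounded by $i$'s full local connection cost (not an $O(\eps)$ fraction of it), and the leg $d(i,R)$, which you do not charge to anything. Saying the rerouting pays ``the displacement plus a bounded multiple of $\avgcost(i)$'' assumes $d(i,R)=O(\avgcost(i))$, which is exactly what needs an argument---it requires either property (ii) of Definition~\ref{structured_instance} together with $R$ containing a facility near $D^*$, or a separate Markov-style count ensuring each far client's leg $d(j,i)$ dominates. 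Until one of these is pinned down, the claim that the far clients contribute only $O(\eps)$ (really $O(\eps^2)$) of the local cost is unsupported, and the route to the specific constant $8\alpha\eps$ is not yet closed.
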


Therefore, a near optimum solution to $I'$ yields a near optimum solution to $I$. In order to be able to obtain a PTAS they further partition the instance (starting from $I$)  into several
instances $I_j$ each of which has certain structural properties.

\begin{definition}[Structured Instance with Bounded Aspect Ratio] \label{structured_instance}
Consider an instance of \fl consisting of  an edge-weighted graph $G=(V,E)$, a set of clients $C\subseteq V$, and a set of facilities $F \subseteq V$ with opening costs $f_i$ (for each $i \in F$).
Suppose we are provided a set $\tilde{D}\subseteq F$ that partitions $C$ into nonempty clusters $\left\{cluster(i) \right\}_{i \in \tilde{D}}$.
We say that the instance has bounded aspect ratio of the average costs and being structured if the following properties hold:
\begin{enumerate}
\item[i)] $\epsilon^{2}\cdot avgcost(i) \leq d_{G}(j,i) \leq \epsilon^{-2}\cdot avgcost(i)$, for each $i \in \tilde{D}$ and each $j \in cluster(i)$,
\item[ii)] for each $i \in \tilde{D}$, there exists $i^* \in D^{*}$ such that $d_{G}(i,i^*) \leq 2\cdot  avgcost(i)$,
\item[iii)] the aspect ratio of the average costs (i.e. 
$\max_{i,j\in \tilde{D}}\frac{avgcost(i)}{avgcost(j)}$) is bounded by $r=\epsilon^{-O(\epsilon^{-2})}$.
\end{enumerate}
\end{definition}

They show that one can partition $I$ into instances $I_j=(G,C_j,F_j,f^j_i)$ such that each instance satisfies the
structural properties of Definition \ref{structured_instance} 
and the following hold:

\begin{lemma}[Lemmas 10 and 11 \cite{cohen2019polynomial}]
Given $D_j\subseteq F$ for $I_j$, we can build $D\subseteq F$ in polynomial time s.t. $\cost_{I'}(D)\leq \sum_j \cost_{I_j}(D_j)+10\alpha\epsilon\opt.$
Furthermore $\sum_j \opt(I_j)\leq(1+9\alpha\epsilon)\opt.$
\end{lemma}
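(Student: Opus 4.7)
The plan is to partition $I'$ into the subinstances $I_j$ via a random logarithmic bucketing of cluster average costs, and then use a rerouting/charging argument whose expected loss is controlled by the random offset. With $r = \epsilon^{-O(\epsilon^{-2})}$ the target aspect ratio from Definition~\ref{structured_instance}(iii), I would pick an offset $s$ uniformly from $[0,\log r)$ and assign each $i \in \tilde D$ to bucket $b(i) = \lfloor (\log avgcost(i) - s)/\log r \rfloor$. Let $\tilde D_j = \{i : b(i)=j\}$ and form $I_j = (G,C_j,F,f^j_\cdot)$ with $C_j = \bigcup_{i \in \tilde D_j} cluster(i)$; set $f^j_i = f_i$ exactly when $i$ is canonically associated to bucket $j$, via the pairing $i \mapsto i^*$ of Definition~\ref{structured_instance}(ii) with ties broken so each $D^*$-facility is charged to only one bucket, and $f^j_i = +\infty$ otherwise. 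Properties (i) and (ii) are inherited from $I'$ unchanged, and (iii) is enforced by construction.

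For the first inequality, I would exhibit for each $I_j$ the feasible solution $D^*_j \subseteq D^*$ consisting of the $D^*$-facilities charged to bucket $j$; their opening costs telescope cleanly to $\open(D^*) \le \opt$. A client $c \in cluster(i) \subseteq C_j$ is served either by its original $D^*$-facility, if that facility lies in $D^*_j$, at no extra cost; or rerouted through the paired $i^* \in D^*_j$ at extra cost bounded by $d_G(c,i)+d_G(i,i^*) = O(\epsilon^{-2}\, avgcost(i))$ via Definition~\ref{structured_instance}(i)-(ii). Over the random offset, the rerouting event has probability proportional to the log-ratio of the relevant $avgcost$ values divided by $\log r$; combining this with the per-client rerouting cost and using $\cost_{I'}(\tilde D) = O(\alpha/\epsilon)\opt$ yields the $9\alpha\epsilon\opt$ slack. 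For the combination step, take $D := \bigcup_j D_j$: opening costs telescope because $f^j_i$ is finite in only one bucket, and $d_G(c,D) \le d_G(c,D_j)$ means connection costs only decrease; the $10\alpha\epsilon\opt$ slack absorbs any mismatch between $I'$ and the disjoint union of the $I_j$'s via the same rerouting argument run in reverse.

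The main obstacle is calibrating the bucket width so that the expected rerouting cost really comes out to $O(\alpha\epsilon)\opt$. The crucial enabler is Definition~\ref{structured_instance}(ii), which ties each $\tilde D$-facility $i$ to a nearby $D^*$-facility $i^*$ with $d_G(i,i^*) \le 2\,avgcost(i)$; without this, a client's original optimum-facility could have arbitrarily different $avgcost$ from $i$ and a deterministic bucketing would leak an $\epsilon^{-O(1)}$ factor of connection cost. The remainder is routine bookkeeping on the cost decomposition of $I'$, together with the standard observation that $\cost_{I'}(\tilde D) \le O(\alpha/\epsilon)\opt$ inherited from $\tilde D$ being an $\alpha$-approximation to the $\epsilon$-scaled instance $\tilde I$.
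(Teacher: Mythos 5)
The paper does not actually prove this lemma; it is quoted from Cohen-Addad, Pilipczuk, and Pilipczuk \cite{cohen2019polynomial} (their Lemmas~10 and~11), with the remark that the argument there uses only the metric property and so transfers unchanged to the UDG setting. So I am judging your sketch on its own merits against what that proof must accomplish, and there are two genuine gaps.

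First, your construction of the subinstances is not well-defined as an algorithm. You set $f^j_i = f_i$ exactly when $i$ is ``canonically associated to bucket $j$ via the pairing $i \mapsto i^*$ of Definition~\ref{structured_instance}(ii).'' But $i^*$ lives in $D^*$, the optimum solution, which the algorithm does not know. The instances $I_j=(G,C_j,F_j,f^j_\cdot)$ have to be built in polynomial time from the input and from $\tilde D$ alone; the optimum can appear in the \emph{analysis} (as a witness for $\sum_j \opt(I_j)$) but not in the \emph{definition} of $F_j$ or $f^j_\cdot$. In \cite{cohen2019polynomial} the facility set of each subinstance is carved out by distance thresholds around the $\tilde D$-cluster centers falling in that bucket, which is something the algorithm can compute.

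Second, even granting the construction, the central charging for $\sum_j \opt(I_j) \le (1+9\alpha\epsilon)\opt$ does not go through as written. You take as witness $D^*_j$ the set of $D^*$-facilities ``charged'' to bucket $j$, with each $D^*$-facility charged to exactly one bucket by a tie-break on the $\tilde D \to D^*$ pairing. But several $\tilde D$-facilities in \emph{different} buckets may be paired to the same $i^*\in D^*$; after the tie-break, $i^*$ is available in only one of those buckets, and a client $c\in cluster(i)$ whose bucket lost the tie-break has no fallback in $D^*_j$: neither its true optimum facility (which may have been charged elsewhere) nor $i^*$ is open. The rerouting argument therefore has no target. Relatedly, the expected-loss calculation is asserted but not carried out: you say the rerouting probability is ``proportional to the log-ratio of the relevant avgcost values divided by $\log r$,'' but you never identify which two avgcosts form this ratio, nor show that the product of this probability with the per-client detour $O(\epsilon^{-2}\,avgcost(i))$, summed over clients, telescopes against $\cost_{I'}(\tilde D)=O(\opt/\epsilon)$ to give $O(\alpha\epsilon\,\opt)$. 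That computation is the heart of Lemma~11 of \cite{cohen2019polynomial} and it is missing here. The combination step ($D=\bigcup_j D_j$, connection costs only decrease, opening costs add) is fine in spirit, though its bookkeeping also depends on the subinstance facility sets being defined correctly, which brings us back to the first issue.
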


Recall that all these results only use the metric property of instance $I$. It can be verified that these results also hold in UDG setting as well. These together show that to prove Theorem \ref{thm:ptas_udg} it is sufficient to present a QPTAS for instances satisfying
conditions of Definition \ref{structured_instance} and this is what we will do.

%

%

In order to get a PTAS for such instances in the planar case, \cite{cohen2019polynomial} uses a 
Baker-type type layering technique \cite{Baker} in conjunction with the properties of the instance, and further decompose the instance into instances of constant radius at a small  loss. By utilizing balanced separators for planar graphs, they obtain a hierarchical decomposition of the plane embedding of the graph into separate regions (similar to the decomposition of Euclidean instances by Arora \cite{arora1998approximation}). By placing $O(\log n)$ ``portals" along each separator they use
 dynamic programming over this decomposition, while the portals control the interface of different regions.

In our setting, instead of using Baker layering to obtain low diameter instances, we use Theorem \ref{ActualKPR} 
to break the instance that satisfies conditions of Definition \ref{structured_instance} (at a small loss) into low diameter instances. We will use a variant of balanced (partly) separator theorem developed by \cite{YAN2012} for UDGs. Roughly speaking, they show that given a UDG, one can find two paths
originating from a vertex $s$ to two other vertices $x,y$ that are  shortest paths, $P_{s\sim x}$ and $P_{s\sim y}$, such that the removal of these two paths and all the vertices that are within distance
3 of them leaves connected components of size at most $\frac{2}{3}|V(G)|$
(see Theorem \ref{YanLemma}). We use this theorem as an (almost) balanced
separator to obtain a hierarchical decomposition of a low diameter instance into smaller instances. We also place $O(\log n)$ portals 
at these separators and use Dynamic Programming to combine the solutions.
After a logarithmic depth of hierarchical decomposition, we arrive at instances that are easy to solve using other methods (e.g. exhaustive search or another PTAS that is described in Theorem \ref{thm:simple_ptas}).

As mentioned earlier, to prove Theorem \ref{thm:ptas_udg} it is sufficient to give a QPTAS for instances satisfying the conditions of Definition \ref{structured_instance}.

\subsection{Balanced (partly) Separator for UDGs.}
To obtain our dynamic program scheme, we would like to be able
to break a UDG into (almost) balanced parts by picking a separator. This will act as a ``cut" in the disection schema developed by Arora \cite{arora1998approximation} that has been used in designing PTAS's for various optimization problems on Euclidean plane. For this, we utilize the balanced separator theorem for UDGs as presented by Yan et al. \cite{YAN2012}.  
Let $N^{i}_{G}[v] = \{u \in V(G):  d_{G}(v,u) \leq i\}$ and $N^{i}_{G}[S] = \cup_{v \in S}N^{i}_{G}[v]$. A hop-shortest path between two nodes $x,y$ is a path with the minimum number of edges.
\begin{theorem}[\cite{YAN2012}]\label{YanLemma}
    For a \udg $G$, $X \subset V(G)$ and a root $s \in V(G)$, there exist two nodes $x,y$ of $V(G)$ and hop-shortest paths $P_{s\sim  x} = (s,\dots, x)$ and $P_{s\sim  y} = (s, \dots, y)$ for which the removal of $ N^3_{G}[P_{s\sim x}] \cup  N^3_{G}[P_{s\sim  y}] $ from $G$ yields components each having at most $\frac{2}{3}|X|$ vertices from $X$.
\end{theorem}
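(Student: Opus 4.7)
The plan is to mimic the classical Lipton--Tarjan separator argument in the UDG setting, using a hop-BFS tree of $G$ rooted at $s$ together with the fact that the $N^3_G$-thickening of a shortest path behaves like a fattened combinatorial separator. First I build a hop-BFS tree $T$ of $G$ at $s$ and, for every $v \in V(G)$, let $W(v) = |X \cap \mathrm{subtree}_T(v)|$. I then pick $x$ by a heavy-child descent: starting from $s$, I move repeatedly to the child with maximum $W$, stopping at the first vertex $x$ whose subtree weight lies in the balanced window $(|X|/3,\, 2|X|/3]$. I take $P_{s \sim x}$ to be the unique tree path from $s$ to $x$, which is automatically hop-shortest.

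For the second path, I examine the components of $G \setminus N^3_G[P_{s \sim x}]$. If all of them already contain at most $2|X|/3$ vertices of $X$, the claim holds by taking $y$ to be any vertex giving the trivial path. Otherwise a unique heavy component $B$ exists, whose $X$-mass is concentrated either in $\mathrm{subtree}_T(x)$ or in the portion of $T$ above $x$ that survives the thickening-removal; in either case I run a second heavy-child descent restricted to $B \cap V(T)$ to produce a vertex $y$ with balanced subtree weight, and I take $P_{s \sim y}$ to be the hop-shortest tree path from $s$ to $y$.

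The main obstacle, and the step that requires genuine UDG-specific geometry, is showing that removing both $N^3_G$-thickenings really does sever every non-tree edge of $G$ that could otherwise merge subtrees separated by the two chosen paths. Such crossing edges have Euclidean length at most $1$, and combined with the hop-shortest property of $P_{s \sim x}$ and $P_{s \sim y}$ this tightly constrains where the endpoints can be attached to the tree. The key geometric lemma I would need to isolate is that, under these constraints, any surviving crossing edge between two subtrees severed by the paths would force one of its endpoints to lie within graph-distance $3$ of $P_{s \sim x}$ or $P_{s \sim y}$, contradicting its survival after thickening-removal. Once that lemma is in place, combining it with the weight bounds produced by the two heavy-child descents (every child subtree of $x$ has $W \le |X|/3$, and similarly for $y$ once restricted to $B$) shows that every remaining component inherits at most $2|X|/3$ vertices of $X$, which is the desired conclusion; everything else is bookkeeping on subtree weights.
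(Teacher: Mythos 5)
The paper does not actually prove this statement: Theorem~\ref{YanLemma} is imported verbatim from Yan, Xia, and Zhang~\cite{YAN2012} and used as a black box; the only thing the paper proves itself (in the appendix) is the easy consequence Theorem~\ref{TarjanForUDG}. So there is no in-paper proof to compare against, and your proposal has to be judged on its own.

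Judged that way, your sketch leaves the hard part unargued. The place where all the UDG-specific content lives is exactly the ``key geometric lemma'' you name --- that any crossing edge between subtrees separated by the two chosen hop-shortest paths must have an endpoint within hop-distance $3$ of $P_{s\sim x}\cup P_{s\sim y}$. You state that this lemma ``would be needed'' and then invoke it, but you never argue it, so the proof is not actually there. This is not a bookkeeping step: it is the theorem. Everything else (heavy-path descent, balancing subtree weights) is standard separator machinery that works for any graph; the $N^3$-thickening bound is what makes it a UDG statement, and it has to be earned from the geometry (an edge has Euclidean length at most $1$, hop-shortest paths control BFS levels, UDGs have bounded growth, etc.). Without that argument there is no proof.

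There are also two smaller but real problems with the combinatorial scaffolding. First, your stopping rule for the heavy-child descent --- ``stop at the first vertex whose subtree weight falls in $(|X|/3,\,2|X|/3]$'' --- is not guaranteed to fire: a single step from a vertex $v$ with $W(v)>2|X|/3$ to its heaviest child can drop the weight below $|X|/3$ when $v$ has many light children, so you may never land in the window. The standard fix (take the deepest vertex with $W>2|X|/3$ and separate its children) is available, but then the path-separator argument needs to be restated around it. Second, the ``second heavy-child descent restricted to $B\cap V(T)$'' is not well defined: after deleting $N^3_G[P_{s\sim x}]$, the surviving vertices of the heavy component $B$ need not form a subtree of $T$, so ``subtree weight inside $B$'' is ambiguous, and the conclusion that the second path balances all remaining components does not follow from the sketch as written. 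Your observation that at most one component can exceed $2|X|/3$ is correct, but the handling of that component is where the argument has to be completed, and it currently isn't.
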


This theorem serves as a counterpart to the well-known planar balanced shortest paths separator theorem by Lipton and Tarjan. 
However, it poses a challenge due to the fact that the separator is formed by the 3-neighborhoods of the paths. This means that we must not only remove the shortest paths but also all nodes within a distance of three from the nodes on the shortest paths. To tackle this challenge, we narrow our focus to cases where the average distance between clients and facilities is relatively large. By doing so, we can assume that clients and facilities, which end up on two different sides of the shortest path, always get connected via nodes in $V(P_{s\sim  x} \cup P_{s\sim  y})$. 
Note that, as stated in Theorem \ref{TarjanForUDG} (which is a slight modification of this theorem), the only exceptions to this assumption occur when  the path between clients and facilities crosses the border using an edge whose endpoints are very close to nodes in $V(P_{s\sim  x} \cup P_{s\sim  y})$. However, using the fact that we can assume the average distance between clients and facilities is relatively large, we can force those paths to visit nodes in $V(P_{s\sim  x} \cup P_{s\sim  y})$ with a relatively tiny error.

In the following, we demonstrate that a slight modification of this theorem yields a balanced, yet partial in some sense, separator for UDGs. More precisely, we show the following (this follows from Theorem \ref{YanLemma} but we provide a proof for the sake of completeness).

\begin{theorem}\label{TarjanForUDG}
For a \udg $G$, $X \subset V(G)$ and a source $s \in V(G)$, there exists two nodes $x,y$ of $V(G)$ and hop-shortest paths $P_{s\sim x} = (s,\dots, x)$ and $P_{s\sim  y} = (s, \dots, y)$ such that removing  $V(P_{s\sim  x} \cup P_{s\sim y})$ partitions the vertices $V(G \backslash (P_{s \sim x} \cup P_{s\sim  y}))$ into two sets $G_1, G_2$ 
each having at most $ \frac{2}{3} |X|$ vertices from $X$.
Additionally, for any edge $ab \in (V(G_1) \times V(G_2))\cap E(G)$,  there exists $ c \in V(P_{s\sim  x} \cup P_{s\sim  y}) $ such that $d_{G}(a,c), d_{G}(b,c) \leq 4 $.    
\end{theorem}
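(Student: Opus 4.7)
My plan is to obtain the two hop-shortest paths and component structure from Theorem~\ref{YanLemma}, observe that its extra $3$-hop padding is exactly what gives us the closeness guarantee once we put those vertices back, and finally partition everything outside the paths with a short bin-packing argument. Let $S = V(P_{s\sim x}) \cup V(P_{s\sim y})$ and let $C_1,\ldots,C_t$ be the components of $G[V(G)\setminus N^3_G[S]]$ supplied by Theorem~\ref{YanLemma}, each with $|X\cap C_j|\le \tfrac{2}{3}|X|$. The pivotal observation is: if $ab\in E(G)$ and both $a,b\in V(G)\setminus N^3_G[S]$, then $a,b$ lie in the same $C_j$, so any partition that keeps every $C_j$ whole automatically puts them on the same side. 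Consequently any cross-edge $ab$ between $G_1$ and $G_2$ has at least one endpoint in $N^3_G[S]\setminus S$; say $a\in N^3_G[S]$ and choose $c\in S$ with $d_G(a,c)\le 3$. Because every edge of a \udg has weight at most $1$, the triangle inequality gives $d_G(b,c)\le d_G(b,a)+d_G(a,c)\le 1+3=4$, matching the required closeness condition.

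To obtain the balance I would first distribute the $C_j$'s into two groups $\mathcal{G}_1,\mathcal{G}_2$ with $X$-count at most $\tfrac{2}{3}|X|$ each: if some $C_j$ has $X$-count at least $\tfrac{1}{3}|X|$, put it alone in $\mathcal{G}_1$ and all other $C_j$'s in $\mathcal{G}_2$ (whose total is at most $|X|-\tfrac{1}{3}|X|=\tfrac{2}{3}|X|$); otherwise every $C_j$ has $X$-count strictly below $\tfrac{1}{3}|X|$, and greedily inserting $C_j$'s into $\mathcal{G}_1$ until its total first reaches $\tfrac{1}{3}|X|$ leaves both sides with $X$-count at most $\tfrac{2}{3}|X|$. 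The bridge vertices in $N^3_G[S]\setminus S$ can then be dealt into the two groups freely---the closeness condition is automatic for any cross-edge touching a bridge vertex, as shown above---and the combined remaining capacity $\tfrac{4}{3}|X| - \sum_j |X\cap C_j|$ across the two groups covers the bridge $X$-count since $|X\setminus S|\le |X|$, so a valid dispatch always exists.

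The only substantive step here is the edge-closeness claim; everything else is bookkeeping. That claim relies on the envelope radius $3$ in Theorem~\ref{YanLemma} being strictly larger than the maximum edge weight $1$ in a \udg, so that a single edge cannot jump from outside $N^3_G[S]$ to a vertex more than distance $4$ from $S$; hence the anchor chosen for one endpoint automatically witnesses the other. The main ``obstacle'' is therefore really just recognising that Yan's theorem already does all the heavy geometric work, leaving a clean combinatorial post-processing.
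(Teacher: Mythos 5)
Your proof is correct and follows essentially the same route as the paper's: invoke Theorem~\ref{YanLemma}, observe that any edge between the two sides of the final partition must have an endpoint in $N^3_G[S]\setminus S$ (since two vertices outside $N^3_G[S]$ joined by an edge lie in a common component), and derive the closeness bound from the triangle inequality with the unit edge-weight cap. You are somewhat more careful than the paper in spelling out the balance/bin-packing step (the paper merely asserts that such a partition exists), but the substance is the same.
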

Note that we say $V(P_{s\sim  x} \cup P_{s\sim  y})$ is a separator between $V(G_1)$ and $(G_2)$ if there are no edges in $G$ that connect a vertex from set $V(G_1)$ to a vertex from set $V(G_2)$.
However, here, we relax this condition and allow for the presence of such edges, provided that their endpoints are in close vicinity to the separator. See Appendix \ref{appendix} for the proof.\\

\section{Proof of Theorem \ref{thm:ptas_udg}}
In this section, we prove Theorem \ref{thm:ptas_udg}.
Consider an instance of \fl, consisting of an edge-weighted \udg $G=(V,E)$, a set of clients $C\subseteq V$, and a set of facilities $F \subseteq V$ with opening costs $f_i$ (for each $i \in F$).
Let $D^*$ denote the set of facilities opened by the optimal solution, with $\opt$ representing the cost of this solution. Additionally, let $\tilde{D}$ denote the $O(1/\epsilon)$-approximation solution as described in Section \ref{planar_fl}, and let $\cost(\tilde{D})$ represent the cost of this solution. Note $\cost(\tilde{D}) = \sum_{i \in \tilde{D}} \Big(f_i + \sum_{j \in cluster(i)}d_{G}(j,i) \Big) = O(\opt/\epsilon)$. 
We assume that the instance satisfies the properties mentioned in Definition \ref{structured_instance}.
Let $r = \epsilon^{-O(\epsilon^{-2})}$ and $N > 0$ denote the minimum distance between a client and its facility (cluster center) in $\tilde{D}$. It can be verified using the properties of Definition \ref{structured_instance} that the inequalities $N \leq d_G(j, f) \leq rN$ and $avgcost(i) \leq rN$ hold for each $i \in \tilde{D}$ and each $j \in cluster(i)$ (these are essentially the conditions in Lemma 9 of \cite{cohen2019polynomial} except that we can't do scaling in UDG instances, hence we have the factor $N$). Moreover, based on property (ii) (Definition \ref{structured_instance}) of the instance, it follows that $d_{G}(j, D^*) \leq 3Nr$ holds for each client $j \in C$. We will use the following observation:
If $T$ is a BFS tree from a vertex $s$ in a UDG $G$, then
for any two vertices $u,v$ at levels $i,i+2$ of the tree respectively (for
any $i$) we have their (weighted) distance in $G$ is strictly larger than $1$ (or else
they would be adjacent and hence cannot be at two levels $i,i+2$ of the BFS tree)
and no more than 2 (as any two adjacent vertices have distance at most $1$), i.e.
$1< d_G(u,v)\leq 2$. Thus, BFS will 2-approximate actual (weighted) shortest paths.

\begin{lemma}
At a loss of $O(\epsilon\cdot \opt)$ we can decompose the instance into a number of independent instances where each has diameter at most $rN/\epsilon^2$.
\end{lemma}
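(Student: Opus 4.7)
I would apply Theorem \ref{ActualKPR} to $G$ with parameter $\delta = \Theta(rN/\epsilon^2)$: $O(1)$ iterations of the $\delta$-chopping operation give a random partition of $V(G)$ into connected components each of weak diameter at most $rN/\epsilon^2$, and these become the independent sub-instances. The hidden constant in $\delta$ must absorb the factor-$2$ slack in the observation that BFS $2$-approximates UDG weighted distances, so that the stated weak diameter (measured in $d_G$) really is $\leq rN/\epsilon^2$.

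The core probabilistic fact, used throughout, is that across all $O(1)$ iterations any fixed edge of $G$ is removed with probability $O(1/\delta)$ (a union bound over iterations). For each client $j$, let $\pi_j$ be a hop-shortest $G$-path from $j$ to its nearest optimal facility $i^*(j) \in D^*$. By the cited observation its hop length is $|\pi_j| \leq 2 d_G(j, D^*) + O(1) = O(d_G(j, D^*))$, and property (ii) of Definition \ref{structured_instance} gives $d_G(j, D^*) \leq 3rN$. A union bound over the edges of $\pi_j$ then yields $\Pr[\pi_j \text{ is severed}] \leq O(d_G(j, D^*)/\delta)$.

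I then build a feasible solution for the decomposed instance of expected cost $(1+O(\epsilon))\opt$ as follows. Open all of $D^*$ (each $i^* \in D^*$ sits naturally in the sub-instance containing it). If $\pi_j$ is not severed, $j$ and $i^*(j)$ share a component and we serve $j$ by $i^*(j)$ at cost $d_G(j, D^*)$. If $\pi_j$ is severed --- call such $j$ a \emph{boundary} client --- additionally open $\tilde{f}(j) \in \tilde{D}$ in $j$'s sub-instance and serve $j$ by it. The expected extra connection cost is at most
\[
  \sum_{j \in C} \Pr[\pi_j \text{ severed}] \cdot d_G(j, \tilde{f}(j))
  \;\leq\; O(rN/\delta)\sum_{j} d_G(j, D^*)
  \;\leq\; O(\epsilon^2)\cdot\opt,
\]
using $d_G(j, \tilde{f}(j)) \leq rN$ and $\sum_j d_G(j, D^*) \leq \opt$.

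\textbf{Main obstacle.} The hardest piece is the expected extra \emph{opening} cost of the backup centers $\tilde{f}$, together with the possibility that $\tilde{f}(j)$ itself lies in a different component from $j$ (a further event of probability at most $O(d_G(j,\tilde{f}(j))/\delta) = O(\epsilon^2)$ by the same tail estimate). A crude bound that opens $\tilde{f}$ whenever any client of its cluster is boundary sums to $\sum_{\tilde{f}} f_{\tilde{f}} = O(\opt/\epsilon)$, which is too weak. I would refine by amortizing $f_{\tilde{f}}$ against its cluster's connection cost in $\tilde{D}$ via $f_{\tilde{f}} \leq |\cluster(\tilde{f})|\cdot\avgcost(\tilde{f})$ and $d_G(j, \tilde{f}) \geq \epsilon^2\avgcost(\tilde{f})$ (properties (i) and the definition of $\avgcost$), so that the expected opening contribution is paid for by $O(\epsilon^2)\cdot\cost(\tilde{D}) = O(\epsilon)\cdot\opt$. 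The disconnected-$\tilde{f}$ subcase is handled by a second cheap fallback (e.g., the cheapest facility in $V(G_c)\cap F$) whose expected activation cost is controlled by the same $O(\epsilon^2)$ tail.
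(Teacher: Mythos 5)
Your high-level plan (KPR-style chopping via Theorem~\ref{ActualKPR}, bound the probability a client's path to its optimal facility is cut, and use $\tilde{D}$ as a backup) is a reasonable alternative to the paper's argument, and you correctly identify the crux: the expected extra \emph{opening} cost of the backup facilities is the part that does not fall out of the per-edge cut probability. Unfortunately, the amortization you sketch does not close that gap. The bound $f_{\tilde f}\leq|\cluster(\tilde f)|\cdot\avgcost(\tilde f)$ together with $\Pr[j\ \text{boundary}]\leq O(\epsilon^2)$ gives an expected opening contribution from cluster $\tilde f$ of order $\min\{1,|\cluster(\tilde f)|\,O(\epsilon^2)\}\cdot f_{\tilde f}$, and once $|\cluster(\tilde f)|\gtrsim 1/\epsilon^2$ the $\min$ saturates and you are charged the full $f_{\tilde f}$, which can itself be as large as $\Theta(\opt/\epsilon)$ for a single cluster. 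Substituting the sharper $\Pr[j\ \text{boundary}]=O(d_G(j,D^*)/\delta)$ does not help either, since the per-client probability is being multiplied by a quantity ($f_{\tilde f(j)}$) that is not a per-client cost and is shared across the entire cluster. The root of the problem is that ``some client in the cluster is boundary'' can happen with constant probability for large clusters, so the opening cost cannot be charged to the $\cost(\tilde D)$ at rate $O(\epsilon^2)$ client-by-client. Your second fallback (cheapest facility in the component) has the same difficulty.

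The paper sidesteps this entirely by a different randomization. It groups BFS levels into layers of thickness $14Nr$, bundles $\lceil 1/\epsilon^2\rceil$ consecutive layers with a random offset, and designates the first layer of each bundle as ``red.'' Because a random $\Theta(\epsilon^2)$ fraction of layers are red, the \emph{entire} $\tilde D$-cost (opening plus connection) of every cluster whose center lands in a red layer is in expectation $O(\epsilon^2\cost(\tilde D))=O(\epsilon\cdot\opt)$; these facilities are then treated as open for free and their clients deleted. The crucial accounting trick is that the random event is ``does $\tilde f$'s layer turn red,'' a single $O(\epsilon^2)$-probability event per cluster, rather than a per-client event — this is exactly what makes the opening cost amortizable. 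Independence is then argued deterministically from $d_G(j,D^*)\leq 3Nr$ being less than half a layer's width ($7Nr$), so no remaining client needs to cross a red layer, after which Theorem~\ref{ActualKPR} is invoked to bound the weak diameter. If you want to repair your proof, you need to replace the per-client boundary events with a single cluster-level random event (as the paper does), or otherwise show that the boundary events for clients of the same cluster are sufficiently correlated that the union bound is not tight — which your current write-up does not do.
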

\begin{proof}
Suppose we run a BFS from an arbitrary node $s$ and group the
vertices into layers where layer $i$ consists of all the
nodes of BFS at levels between $(i-1)14N r,\ldots, 14iNr-1$. Note that the ``thickness" of a layer is $14N r$
levels of BFS, so for any two vertices $u,v$ in layers $i,i+2$:
$d_G(u,v)\geq 7Nr$. Since the distance of any client to their facilities in the optimum is at most $3Nr$,
no path from a client to a facility (in the optimum) would cross an entire layer. Now we group consecutive layers into bundles of $\lceil\frac{1}{\epsilon^2}\rceil$ layers, with a random
off-set chosen from $0,\ldots \lceil\frac{1}{\epsilon^2}\rceil$. 
Suppose we call the first layer of each bundle a ``red" layer
and all the other layers of a bundle are blue; so between every
two ``red" layers we have $\lceil\frac{1}{\epsilon^2}\rceil-1$ blue layers.
We open all the facilities of $\tilde{D}$ in the red layers and serve the clients in their cluster. Based on the random shift and the fact that $\tilde{D}$ was a $O(1/\eps)$-approximate solution, the total cost incurred to open these facilities and serve their clients is at most $O(\epsilon^2\cdot\cost(\tilde{D}))=O(\epsilon\cdot\opt)$. So we can assume all these facilities are open (i.e. have zero opening cost) and we delete the clients they have served.
For any other client left in the red layer, they can be partitioned
into two parts: those in the top $7Nr$ levels are called {\em top} red clients and those in the bottom
$7Nr$ levels are called {\em bottom} clients. Since for each client $j$, $d_G(j,D^*)\leq 3Nr$,
the top clients cannot cross over the bottom $7Nr$ levels to be served by a facility. Similarly, the bottom clients cannot be crossing the top $7Nr$ levels to be served by a facility. 
We show how this breaks the instance into independent instances.

For the remaining (blue) layers in every bundle we consider the clients and facilities in those layers, together with the  facilities and remaining clients in the nearest $7Nr$ levels of the two red layers above and below them. Note that these instances are now independent
since for every client $j$, $d_G(j,D^*)\leq 3Nr$, so no client in a blue layer would need to pass beyond $7Nr$ levels into a red layer to reach its facility in optimum. Similarly, the remaining red clients will only need the facilities in the blue layers they are grouped with. This means we can solve the blue layers of each bundle (together with the facilities and clients in a strip of $7Nr$ layers above and below) independently. So we consider the blue layers of each bundle plus the $7Nr$ (red) layers above and below as one instance (recall the facilities in the red layers are open).
This means we can assume we have deleted any connections (edges) between these independent instances.
This is similar to one round of chopping in the proof of Theorem \ref{ActualKPR}.
We perform a sequence of four chopping rounds as above on the graph and utilizing Theorem \ref{ActualKPR}, we can assume that the weak diameter of each independent instance generated is bounded by ${rN}/{\epsilon^2}$ and the total cost paid for the facilities
in the layers chopped is $O(\epsilon\cdot\opt)$; those facilities now have opening cost zero.

So from now on (at a loss of $O(\eps\cdot\opt)$) we focus on each independent instance where the weak diameter is bounded by $rN/\eps^2$. Let's call these instances $H_1, H_2,\ldots$. For each such instance $H_\ell$ we use $C_\ell$ to denote the clients that belong to $H_\ell$. It is easy to see that the $C_\ell$'s are disjoint.
Next, we modify $H_\ell$'s so that they have bounded diameter (not just weak diameter): we add all the vertices of $G$ that are within distance $rN/\eps^2$ of some vertex of $H_\ell$ to $H_\ell$.
Now for each $H_\ell$ we have that the diameter (not just weak diameter) of $H_\ell$ is bounded by $3rN/\eps^2$. Note that the set of clients and facilities of $H_\ell$ is the same as before (we do not  bring in the clients and facilities that were outside of $H_\ell$ when adding vertices to bound the diameter).
\end{proof}

It can also be seen that the sum of optimum solutions of all these instances $H_\ell$'s costs at most $\opt$ since for each client in $H_\ell$ their optimum facility is also in $H_\ell$. We will prove the following lemma:

\begin{lemma}\label{lem:main}
    Given instance $I$ for \fl on a UDG $G$
    let $\tilde{D}$ be an approximate solution as described above and suppose we build instances
    $H_\ell$. There is a quasi-polynomial algorithm
    that produces solutions for $H_\ell$'s such that
    the total cost of the solutions is at most
    $O(\eps^2\cdot\cost(\tilde{D}))+(1+O(\eps))\sum_\ell \opt(H_\ell)$.
\end{lemma}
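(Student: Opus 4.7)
\emph{Proof proposal.} The plan is to mimic the Arora--Cohen hierarchical dissection scheme, using the balanced partial separator for UDGs from Theorem \ref{TarjanForUDG} in place of the planar shortest-path separator, with portals placed along each separator and a dynamic program over the resulting hierarchy. We handle each $H_\ell$ independently. First we dispose of the regime where $N$ is small: if $N = O_\eps(1)$ then the diameter of $H_\ell$ is $O_\eps(1)$, so $H_\ell$ fits in a bounding box of constant size in the plane and Theorem \ref{thm:simple_ptas} gives a $(1+\eps)$-approximation directly. Assume henceforth that $N$ is large enough that any constant-length additive detour is at most $\eps N$; this is what will allow us to absorb the partial-separator slack from Theorem \ref{TarjanForUDG}. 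We then build the decomposition by applying Theorem \ref{TarjanForUDG} recursively: at each region $R$, with $X$ equal to the clients and facilities of $R$ and a chosen source $s \in R$, obtain two hop-shortest paths $P_{s\sim x}, P_{s\sim y}$ whose removal splits the rest of $R$ into $G_1, G_2$, each containing at most $\tfrac{2}{3}|X|$ points of $X$. Recurse until $|X| \le O_\eps(1)$; the depth is $O(\log n)$, there are $O(n \log n)$ regions in total, and leaves are solved by brute force.

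Along each separator path, which has weighted length at most the weak diameter of $H_\ell$ and hence at most $O(rN/\eps^2)$, place $t = O(r \log n / \eps^3)$ portals at uniform geodesic spacing $\Delta = O(\eps N / \log n)$, rounding each portal to the nearest path vertex. Given any feasible solution for $H_\ell$, we reroute as follows: whenever a client--facility shortest path uses an edge crossing the separator of some region in the hierarchy, we redirect it through the nearest portal, plus a constant-length correction that converts a ``partial'' separator crossing (an edge $ab$ with $a \in G_1$, $b \in G_2$) into a true passage through a separator vertex using the node $c \in V(P_{s\sim x}\cup P_{s\sim y})$ within distance $4$ guaranteed by Theorem \ref{TarjanForUDG}. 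One crossing costs $O(\Delta + 1) = O(\eps N / \log n)$ extra, and any client--facility path (whose length is at least $N$) crosses at most $O(\log n)$ separators of the hierarchy, so the total rerouting cost per connection is $O(\eps N) = O(\eps)$ times its original length. This yields a $(1+O(\eps))$ factor on $\opt(H_\ell)$; the additive $O(\eps^2 \cost(\tilde{D}))$ slack absorbs low-order terms from the leaves and from rounding portal positions onto path vertices.

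Run a DP bottom-up over the hierarchy. For each region $R$, the DP state specifies, for each of its $O(t)$ boundary portals, an ``interface'' entry drawn from a polynomially bounded set of candidate facilities indicating which nearby facility the flow through that portal is matched to. A DP entry stores the minimum cost of opening facilities inside $R$ and connecting the internal clients that is consistent with the boundary interface and with some choice of interface at the separator chosen inside $R$; combining the two children plus the internal choices is polynomial. The state space per region is $n^{O(t)} = n^{O_\eps(\log n)}$, so the total running time is $n^{O_\eps(\log n)}$, matching Theorem \ref{thm:ptas_udg}. The main obstacle in all of this is the partial-separator nature of Theorem \ref{TarjanForUDG}: unlike a planar shortest-path separator, edges can cross between $G_1$ and $G_2$ without touching the separator paths themselves. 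This is precisely why the reduction to the large-$N$ regime is essential---so that the constant-size detour around each such crossing edge, summed over all $O(\log n)$ levels of the hierarchy, can be absorbed into the $(1+O(\eps))$ factor without inflating the approximation ratio.
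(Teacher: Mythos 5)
Your outline follows the paper's strategy at a high level: a small-$N$ base case handled by Theorem~\ref{thm:simple_ptas}, a hierarchical decomposition built from the partial separator of Theorem~\ref{TarjanForUDG} with portals along each separator path, and a bottom-up DP over the decomposition tree. However, there are two substantive gaps.

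\textbf{Missing net step, and the resulting hole in the parameter range.} You recurse directly on $H_\ell$, which gives decomposition depth $O(\log n)$. Each level contributes an additive $O(1)$ detour from the partial-separator correction (the distance-$4$ slack of Theorem~\ref{TarjanForUDG}), so over all levels a single client--facility connection accumulates an additive $\Theta(\log n)$. For this to be $O(\eps N)$ you need $N = \Omega(\log n/\eps)$, yet your small-$N$ branch only covers $N = O_\eps(1)$ (since Theorem~\ref{thm:simple_ptas} as stated requires a constant-size bounding box). The range $\omega_\eps(1) \le N \le o(\log n/\eps)$ is handled by neither branch, and your line ``$O(\Delta+1) = O(\eps N/\log n)$'' silently assumes $\Delta \ge 1$, which fails there. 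The paper avoids this precisely by first taking a $1/8$-net $G'$ of $H_\ell$ with $|V'| = O(\Gamma^4)$, so that recursion depth is $O(\log\Gamma) = O_\eps(\log N)$ independent of $n$; then the per-client detour is $O_\eps(\log N)$, which is $O(\eps N)$ once $N \ge 1/\eps^2$. (An alternative fix---observing that the algorithm behind Theorem~\ref{thm:simple_ptas} actually runs in $n^{O(L/\eps^2)}$ time and could be pushed up to $L = \polylog(n)$---is plausible but not what you wrote.)

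\textbf{Undercounting boundary portals.} You state that each region $R$ has $O(t) = O_\eps(\log n)$ boundary portals. In fact the boundary of a depth-$d$ region is the union of the separator paths of \emph{all} of its ancestors, hence $O(d)$ paths with $O(d\cdot t)$ portals, which at depth $\Theta(\log n)$ is $\Theta_\eps(\log^2 n)$. The paper explicitly calls out this accumulation as ``the only reason our running time becomes quasi-polynomial.'' With the correct count, your DP state---one of $n$ candidate facilities per portal---has size $n^{O_\eps(\log^2 n)}$, which is still $2^{\polylog n}$ and so still proves the lemma, but does not match the $n^{O_\eps(\log n)}$ bound of Theorem~\ref{thm:ptas_udg}. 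The paper instead stores a distance (rounded to multiples of $\delta$) to the nearest open facility inside/outside of the region for each portal, giving $O_\eps(\log N)$ choices per portal and, after the smoothing trick, $2^{O_\eps(\log^2 N)} = n^{O_\eps(\log n)}$ states.
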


Now having lemma we immediately get Theorem \ref{thm:ptas_udg} since the total error is at most $O(\eps\cdot\opt)$ since the first term in the expression above is at most $O(\eps\cdot \opt)$ and $\sum_\ell \opt(H_\ell)\leq\opt$.

We treat each individual instance $H_\ell$ separately
and will produce a solution for it of cost $(1+O(\eps)\opt(H_\ell)+E_\ell$ such that $\sum_\ell E_\ell\leq O(\eps^2\cdot\cost(\tilde{D}))$.
A key observation that we use to bound the sum of the additive error bound $E_\ell$ above is the following.
Suppose that $N$ is sufficiently large, i.e. $N>1/\eps^2$
(we will handle the case of small $N$ separately).
Note that since $N |C| \leq \cost(\tilde{D}) = O(\opt/\eps)$, if $N$ is large ($N > 1/\eps^2$) and each
client $c \in C$ is moved an extra $O(1)$ then it adds at most $O(|C|) = O(\eps\cdot\opt)$ to the cost of an
optimum solution for the modified instance; hence this instance still has a solution of cost at most
$(1 + O(\eps))\opt$. 
In our algorithm for each $H_\ell$ we might consider paying an extra $O(1)$ for each client.
Using this argument it can be seen that the total additive error for each $H_\ell$ will be $O(|C_\ell|)$; summing over all the instances $H_\ell$ the additive error is at most $O(\eps\cdot\opt)$.

So from now on we assume our instance is one like a graph $H_\ell$ and we prove Lemma \ref{lem:main}.

\subsection{Hierarchical Decomposition  with Portalization}\label{sec:HD}
In this section, we assume the instance is a low diameter instance
as obtained by applying our chopping operation described above, i.e. one $H_\ell$. More specifically, at a loss of $O(\eps\cdot\opt)$ we assume: $N \leq d_G(j, f) \leq rN$ and $avgcost(i) \leq rN$ hold for each $i \in \tilde{D}$ and each $j \in cluster(i)$, and $d_{G}(j, D^*) \leq 3Nr$ holds for each client $j \in C$, and the diameter is bounded by $3rN/\eps^2$. 

We obtain a hierarchical decomposition of the instance and run a Dynamic Programming based on this decomposition. This decomposition has a logarithmic depth, and each region within the hierarchy is obtained by applying two shortest paths separators to the parent region (for readers familiar with the standard decomposition obtained by a dissection of the plane used in designing PTASs for problems such as TSP and \fl on Euclidean planes: one can think of our shortest paths separators as the ``line" that breaks the problem into two balanced instances; we will place ``portals" at appropriate locations along the separator).
One difference in our schema is that the region at each level might have a ``boundary" that is composed of separators of all the ancestor of it; so it is bounded by $O(\log N)$ separators (whereas, for e.g., a region defined in the recursive decomposition for TSP is defined by 4 dissectig lines). 
This is the only reason our running time becomes quasi-polynomial.\footnote{We conjecture a more careful analysis of our scheme and applying the separator could imply a ``boundary" that is defined by only a few separators. This would turn the whole algorithm into a PTAS.}

Recall $r = \epsilon^{-O(\epsilon^{-2})}$. 
Observe that if we have a UDG with diameter at most $\Delta$ then all the points  must be in a bounding box of $2\Delta\times 2\Delta$. 
Using this, Theorem \ref{thm:simple_ptas} provides a PTAS when the value of $N$ is small (i.e. at most $1/\eps^2$). Therefore, we can assume that $N\geq 1/\eps^2$. 
We provide a solution that has multiplicative approximation factor $(1+O(\eps))$ and additive factor $E_\ell$ that can be charged to the number of clients in the instance (eventually we will have $\sum_\ell E_\ell\leq O(\eps^2\cost(\tilde{D}))$.)

Indeed, assuming that $N$ is sufficiently large is vital for our approach, as it enables us to utilize the balanced ''partly'' separator described in Theorem \ref{TarjanForUDG}. 
This separator allows for a hierarchical decomposition of the graph with a logarithmic depth, but it does permit direct interactions between points (within a small distance from the separator path) in the separated regions. By forcing the interaction through the separator nodes, we incur an additive error. However, when the minimum distance between clients and facilities is sufficiently large, this error becomes negligible.

Let $\Gamma = {3rN}/{\epsilon^2} = O_{\epsilon}(N)$ denote the diameter of the graph. Our goal is to obtain a net of size $\rm{poly}(\Gamma)$ so that the number of points we deal with is in terms of $N$ (instead of $n$), while we loose a small factor (compared to $\opt$). 

\begin{lemma}
We can obtain a graph $G'(V',E')$ where $V'\subseteq V(H_\ell)$ with $|V'|=O(\Gamma^4)$ vertices where:
\begin{itemize}
    \item vertices of $V'$ are at least $1/8$ apart.
    \item A solution for $G'$ can be used to obtain a solution for $H_\ell$ by paying an extra cost at most $O(|C_\ell|)$ where $C_\ell$ is the set of clients of $H_\ell$.
\end{itemize}
\end{lemma}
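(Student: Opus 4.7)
The plan is to build $V'$ as a greedy $\tfrac{1}{8}$-net of $V(H_\ell)$, form $G'$ as the UDG on $V'$, and then lift solutions back to $H_\ell$ using short ``snap-to-net'' detours.

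For the construction, initialize $V' \leftarrow \emptyset$ and $W \leftarrow V(H_\ell)$; while $W\neq \emptyset$, pick any $v \in W$, add it to $V'$, and delete every vertex at Euclidean distance at most $\tfrac{1}{8}$ from $v$ from $W$. By construction, points in $V'$ are pairwise at Euclidean distance at least $\tfrac{1}{8}$, and every $u \in V(H_\ell)$ has some point $\pi(u) \in V'$ with $|u - \pi(u)| \leq \tfrac{1}{8}$; fix any such retraction $\pi: V(H_\ell) \to V'$. For the size bound, observe that since the weighted diameter of $H_\ell$ is at most $\Gamma$ and each edge weight equals the Euclidean distance of its endpoints, $V(H_\ell)$ lies in a Euclidean bounding box of side $2\Gamma$. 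The open disks of radius $\tfrac{1}{16}$ around points of $V'$ are pairwise disjoint and contained in a slight enlargement of this box, so a simple area-packing argument gives $|V'| = O(\Gamma^2) \subseteq O(\Gamma^4)$.

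Next define $G' = \mathrm{UDG}(V')$ as an \fl instance: at each net point $v' \in V'$ place a client for every $c \in C_\ell$ with $\pi(c) = v'$, and a facility with opening cost $\min\{f_g : g \in F_\ell,\ \pi(g) = v'\}$, recording an original facility $g_{v'} \in F_\ell$ achieving this minimum. Given any solution to $G'$ that opens $F'\subseteq V'$ and assigns each client $\pi(c)$ to $\sigma(\pi(c))\in F'$, lift it to $H_\ell$ by opening $\{g_{v'}\}_{v'\in F'}$ and assigning each $c \in C_\ell$ to $g_{\sigma(\pi(c))}$. The opening cost is preserved exactly by the choice of $g_{v'}$.

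For the connection cost, the key observation is that since $c,\pi(c)\in V(H_\ell)$ have $|c-\pi(c)| \leq \tfrac{1}{8} \leq 1$, they are adjacent in $H_\ell$ via a UDG edge of weight at most $\tfrac{1}{8}$, and similarly for $\sigma(\pi(c))$ and $g_{\sigma(\pi(c))}$. Moreover, since $V'\subseteq V(H_\ell)$ and every UDG edge of $G'$ is also an edge of $H_\ell$ with the same weight, any $G'$-path is an $H_\ell$-path, so $d_{H_\ell}(u,v) \leq d_{G'}(u,v)$ for all $u,v\in V'$. Chaining these via the triangle inequality,
\[
d_{H_\ell}(c, g_{\sigma(\pi(c))}) \leq \tfrac{1}{8} + d_{H_\ell}(\pi(c),\sigma(\pi(c))) + \tfrac{1}{8} \leq d_{G'}(\pi(c),\sigma(\pi(c))) + \tfrac{1}{4}.
\]
Summing over all $c\in C_\ell$ shows the total connection cost in $H_\ell$ exceeds that of the $G'$-solution by at most $|C_\ell|/4 = O(|C_\ell|)$, as required.

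The only potential obstacle is the direction of the inequality $d_{H_\ell} \leq d_{G'}$ on $V'$: one might worry that throwing away vertices could increase $G'$-distances badly. This is precisely why we only need the easy direction here: we are lifting a $G'$-solution to $H_\ell$, so $G'$-paths serving as witnesses in $H_\ell$ is sufficient, and we never need to approximate $H_\ell$-distances by $G'$-distances.
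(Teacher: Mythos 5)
Your construction omits the crucial augmentation step that the paper performs after building the greedy $1/8$-net, and this is a genuine gap, not an optimization. After the greedy step (which, as you correctly argue, gives $|V'| = O(\Gamma^2)$), the paper additionally scans every pair $u,v \in V'$ and, whenever there exist $u' \in V(H_\ell)\setminus V'$ within $1/8$ of $u$ and $v' \in V(H_\ell)\setminus V'$ within $1/8$ of $v$ with $u'v' \in E(G)$, adds both $u'$ and $v'$ to $V'$. This is exactly why the target bound is $O(\Gamma^4)$ rather than $O(\Gamma^2)$: the $O(\Gamma^2)$ net centers give $O(\Gamma^4)$ candidate pairs, each contributing at most two extra vertices.

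The augmentation is needed for the direction you explicitly dismiss at the end of your writeup. You argue that only the "lift a $G'$-solution to $H_\ell$" direction matters, but the algorithm has to actually \emph{solve} the instance on $G'$ and then lift, so it also needs $\opt(G')$ to be within $O(|C_\ell|)$ of $\opt(H_\ell)$. That requires $d_{G'}(\pi(p),\pi(q))$ to be close to $d_{H_\ell}(p,q)$ for points $p,q$ of the original instance. With only the bare greedy net this can fail badly: two consecutive vertices $p_i, p_{i+1}$ on an $H_\ell$-shortest path can snap to net centers $\pi(p_i), \pi(p_{i+1})$ that are up to $1 + 2\cdot \tfrac{1}{8} = 1.25$ apart in the plane, hence not adjacent in $\mathrm{UDG}(V')$, and the net may contain no nearby vertex to patch the path; in the worst case $G'$ disconnects a client from every facility. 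The augmentation fixes precisely this: whenever an edge $p_i p_{i+1}$ crosses between two balls, its endpoints are inserted into $V'$, so the edges $\pi(p_i)p_i,\ p_i p_{i+1},\ p_{i+1}\pi(p_{i+1})$ all survive in $G'$ and the $H_\ell$-path can be traced in $G'$ at an extra $O(1)$ cost per hop-to-center detour. Your proof of the easy (lifting) direction is fine, but the lemma as used in the paper requires the hard direction too, and your construction does not support it.
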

\begin{proof}
We construct a new graph $G'$ by selecting a subset of vertices from $V(H_\ell)$, denoted as $V'$ ($V'$ will be a ``net"): 
start by adding an arbitrary vertex of $V(H_\ell)$ to $V'$ and then iteratively include nodes from $V(H_\ell)$ that have a minimum Euclidean distance of at least $1/8$ from nodes in $V'$ (alternatively we can think of picking a node from $V(H_\ell)$ to be added to $V'$ and then deleting all the nodes within distance 1/8 of it from $V(H_\ell)$ iteratively). Note that after this round is done, $|V'|$ is $O(\Gamma^2)$. 
Furthermore, for each pair of vertices $u,v \in V'$, 
if there is a vertex $u'$ within distance $1/8$ of $u$ in $V(H_\ell)-V'$ and a vertex $v'$ within distance $1/8$ of $v$ in $V(H_\ell)-V'$  where $u'v'\in E(G)$ then
add both $u',v'$ to $V'$. Note that in this case, $uu',vv',u'v'$ all are edges in $G'$.
This augmentation results in the graph $G'$ (over $V'$) with a total number of vertices of $O(\Gamma^4)$. We arbitrarily order the nodes of $G'$ and for each node $v' \in G'$, we define $B(v')$ as the set of nodes in $H_\ell$ that lie inside the Euclidean ball of radius $1/8$ centered at $v'$ but outside the balls of previously processed nodes. We focus on the UDG induced by $G'$.

To see why we can focus on the net $G'$ and how it helps: 
assume for each node $v\in V'$ that has a client and/or facility in $B(v')$ all its clients/facilities move to $v$ before being connected to any other point
(in other words we move all the clients/facilities of $B(v')$ to $v'$). Note that since we assume $N\leq d_G(j,f)$ in our instance (and since $N>1/\eps^2$) for each $u\in B(v')$, $u$ can ``afford" (at a small loss) to pay the extra $1/8$ to go via its ball center (i.e. $v'$) to be connected to wherever it wanted to go to: it was paying at least $N$ in $\tilde{D}$; so having an extra cost of $1/8$ to go via the center of the ball increases the cost by $O(1)$ per client.
So the total error will be at most $O(|C_\ell|)$ (whereas the cost of optimum  for $H_\ell$ was at least $O(N|C_\ell|)$).
Also, it is easy to see that the size of the graph $G'$, is in terms of $N$ now ($O(\Gamma^4)$).
\end{proof}

Note that the error described above when summed over all $H_\ell$'s, is at most $O(|C|)=O(\cost(\tilde{D})/N)=O(\eps\cdot\opt)$ (since $N|C|\leq\cost(\tilde{D})$).
In other words, if we find a near optimum solution with the assumption that clients/facilities go via the center of the ball they belong to, the extra additive cost over all instances $H_1,H_2,\ldots$ is only $O(\eps\cdot\opt)$. 

We obtain a hierarchical decomposition of $G'$ which will be associated with a rooted binary tree $T$ where the root of $T$ is $V'$. We will have a Dynamic Programming to solve \fl using this decomposition. This will be similar to the hierarchical decompositions obtained for PTAS's on the Euclidean instances (e.g. \cite{arora1998approximation}) except that we use our separator theorem \ref{TarjanForUDG} instead of dissecting lines and that the leaf nodes in our decomposition tree in our case do not correspond to trivial instances (typically a leaf node is a box with only one point in it). In our case, each leaf node is an instance with certain properties which enables us to 
solve it in quasi-polynomial time at a small loss.

Let us describe the decomposition for $G'$. Our hierarchical decomposition tree $T$ is associated with a labeling $\psi: V(T) \rightarrow 2^{V(G')}$: we set the label of the root of $T$ to be $V(G')$. 
Each $t \in T$ represents a subgraph $G'[\psi(t)]$ with the property that if $t_1,t_2$ are children of $t$, then $\psi(t)= \psi(t_1) \cup \psi(t_2)$.
We use $\text{bd}(t)$ to denote the boundary vertices of $\psi(t)$ and the rest of the vertices, $\psi(t)-\text{bd}{(t)}$, we call them the {\em core} vertices and are denoted by $X(t)$. The ``boundary" is obtained by  finding a separator using Theorem \ref{TarjanForUDG} and is added to the boundary that is inherited from the parent (details to follow). The boundary of the root node is the empty set (and so all the vertices of $G'$ are core vertices for the root node of $T$).
The overall idea is whenever a current leaf node $t\in T$ has $|X(t)|>1$ we decompose $G'[\psi(t)]$ into two smaller subgraphs and we obtain children $t_1,t_2$ for $t$.
More specifically, starting from when $T$ is a single node $t$ corresponding to $V(G')$, iteratively, for each leaf $t$ of $T$, if $|X(t)|>1$, apply Theorem \ref{TarjanForUDG} over $G'[\psi(t)]$ with $X=X(t)$ to obtain two graphs $G'_1$ and $G'_2$, along with the two shortest paths $P_{s\sim  x}$ and $P_{s\sim  y}$.
We use the same vertex $s$ to find the two shortest paths $P_{s\sim x},P_{s\sim y}$ for each node $t\in T$ and we make sure this vertex $s$ is passed down. Define $\psi(t_i)$ to be $G[V(G'_i) \cup V(P_{s\sim  x}) \cup V(P_{s\sim  y})]$ and
$\text{bd}(t_i)=\text{bd}(t)\cup P_{s\sim x}\cup P_{s\sim y}$. This also defines $X(t_i)$ to be the subset of vertices of $X(t)$ that fall into $V(G'_i)$ and not in the boundary of $t_i$.
Note that our separator $P_{s\sim  x} \cup P_{s\sim  y}$ separates the vertices $\psi(t)$ of our sub-instance into parts each of which has at most $\frac{2}{3} |X(t)|$ many core vertices.

Every time we find a separator $P_{s\sim x}\cup P_{s\sim y}$ to break the graph $\psi(t)$ (as described) we also designate $m=O_\eps(\log N)$ of the vertices of each of these two paths as {\em portals}. More specifically, let $\delta=O(\eps\Gamma/\log N)$
and designate some of the vertices of these paths as portal so that they are $\delta$ apart (note that as mentioned before, the hop-distance and UDG distances are
within factor 2 of each other). Since these paths have hop-distance length at most $\Gamma$, we will have $O_\eps(\log N)$ portals per path. Our intention is that if two points $u\in X(t_1)$ and $v\in X(t_2)$ are to be connected they have to go through portals of the boundary.

Observe that the depth of $T$ is $h=O(\log \Gamma)=O_\eps(\log N)$ (recall that $|V(G')| = O(\Gamma^4)$).
By construction, for each node $t\in T$, the region defined by $\psi(t)$ consists of core nodes in $X(t)$ and the boundary $\text{bd}{(t)}$ consists of (at most) $h$ separators (which are shortest paths starting from $s$) obtained using Theorem \ref{TarjanForUDG}
each of length proportional to the diameter of the graph, namely $\Gamma$. 
For any $t$ let $\Pi_t$ be the set of portals on the boundary of region $t$. Note that each region boundary
is composed of at most $h=O_\eps(\log N)$ paths and each path has $O_\eps(\log N)$ portals, so each $t$ has at most $O_\eps(\log^2 N)$ portals.

By Theorem \ref{TarjanForUDG} and the way we put the portals (since they are $\delta$ apart), it follows that for any two points $u'\in \psi{(t_1)}$ and $v'\in \psi{(t_2)}$ (where $t_1,t_2$ are children of a node $t\in T$), there exists a portal $\pi$ in the separator that breaks $t$ into $t_1,t_2$ for which $d_{G'}(u',\pi) + d_{G'}(v',\pi) \leq d_{G'}(u',v') + \delta + 4$. 
Now if $u\in B(u'),v\in B(v')$ are two vertices of $H_\ell$ (recall that $B(v')$ is the ball of $v'$ that created net node $v'$ in $G'$), the distance between $u,v$ to go via their ball centers and then via the portal is bounded as well:
$d_G(u,u')+d_{G'}(u',\pi)+d_{G'}(v',\pi)+d_G(v,v')\leq d_{G'}(u',v')+\delta+4.25 \leq d_G(u,v)+\delta+5$.

Suppose we require, for each node $t\in T$ with children $t_1,t_2$, the connection between points in $\psi(t_1),\psi(t_2)$ be via portals in the separator that broke $t$ into $t_1,t_2$. As argued above, this detour adds at most $\delta+5$ to the cost for each connection.
Since the depth of the recursion tree $T$ is $h=O(\log \Gamma)$, the accumulated error incurred to each client/facility for communicating via portals and centers of their balls (among all the levels of the decomposition) is at most $O((\delta+5)\log \Gamma)$. 
Hence, there is a total error of at most $O(|C_\ell|\delta\log \Gamma)$ for all clients connections. By a suitable choice of $\eps'$ depending on $\eps$, and setting $\delta = \frac{\epsilon' \Gamma}{\log \Gamma}$, we get $m = \frac{\log \Gamma}{\eps'} = \frac{\log O_{\epsilon}(N)}{\eps'}$, \emph{the total error} for re-routing connections of clients to be via portals (among all levels of decomposition) between regions is bounded by:
\begin{equation}\label{eq:01}
4|C_\ell|(\eps' \Gamma + 5\log \Gamma) = 4|C_\ell|(\eps'O_{\epsilon}(N) + O_\eps(\log N)).
\end{equation}
This will be our additive error $E_\ell$.
Recall that we have $N \cdot |C| \leq O(\opt/\eps)$, 
implying  $|C|  \leq O(\frac{\opt}{\eps N})$.
This, together with (\ref{eq:01}) and the fact that $\sum_\ell |C_\ell|\leq |C|$, imply that the total additive error for all $H_\ell$'s is bounded by $O(\eps\cdot\opt)$ if $\eps'$ is sufficiently  small.

{\bf Note:} The observation above (that if we re-route clients connections to be via portals adds only $O(\eps\cdot\opt)$
to the total cost) will be crucially used in our DP.
In other words, if for every client/facility connection distance, we have a rounding error of $\delta$ in every level of $T$, then the total error across all $H_\ell$'s is bounded by $O(\eps\cdot\opt)$.
In particular, imagine for each leaf node $t\in T$ we have moved all the points (clients/facilities) in the ball of each node in $\psi(t)$ to the nearest portal in $\Pi_t$.
This adds an extra cost of $O(\eps\cdot \opt)$ over
all levels of decomposition for all $H_\ell$'s. This simplifies our instance significantly and allows us to find a near optimum solution using Dynamic Programming.

{\bf Overview of Dynamic Program based on $T$:}
Here is the overview of our DP based on the hierarchical decomposition $T$. Consider an arbitrary node $t\in T$ and portals $\pi_1,\ldots,\pi_{m'}$ (where $m'=O_\eps(\log^2 N))$ on the paths that define $\text{bd}(t)$. For each portal $\pi_i$ we will have two values $in(\pi_i),out(\pi_i)$: $in(\pi_i)$ indicates (approximately) the distance to the nearest facility (to $\pi_i$) that is supposed to be open in the instance defined by $\psi(t)$, and $out(\pi_i)$ indicates (approximately) the distance to the nearest facility (to $\pi_i$) that will be open outside the region defined by $\psi(t)$ (and clients inside the balls of vertices of
$\psi(t)$ can be connected to them via $\pi_i$ by paying an additional distance cost of $out(\pi_i)$).
These distances are ``approximate" since we only keep multiples of $\delta$, since having a precision parameter of $\delta$ (as argued above) will result in a total error of at most $O(\eps\cdot\opt)$.
For any $t\in T$ and any two vectors $\vec{in},\vec{out}$ of dimensions $m'$ (for the portals of $t$) we will have a DP table entry $A[t,\vec{in},\vec{out}]$. This entry is supposed to store the (approximately) cost of an optimum solution to the \fl instance defined by the balls $B(\psi(t))$ (where $B(S)=\cup_{v\in S} B(v)$) subject to the following conditions:
\begin{itemize}
    \item for each portal $\pi_i$ there is an open facility in the solution with distance to $\pi_i$ at most as specified by $in(\pi_i)$,
    \item for each client either it is served by an open facility inside, or is paying connection cost to go to a portal $\pi_i$ and if $n(\pi_i)$ is the number of clients that go to portal $\pi_i$ then they pay $n(\pi_i)\times out(\pi_i)$ to get serviced by a facility outside of $\psi(t)$ at distance $out(\pi_i)$. This will be part of the cost for the solution.
\end{itemize}

We say vector $\vec{in}$ (and similarly $\vec{out}$) is {\em valid} if it satisfies the following condition: for any two portals $\pi,\pi'$, if their distance (rounded to the nearest
multiple of $\delta$) is $z$ then $|in(\pi)-in(\pi')|\leq z+\delta$. This condition clearly must hold as if there is a facility with distance $in(\pi)$ from $\pi$ then the nearest facility to $\pi'$ cannot be further than $\in(\pi)+z+\delta$ away from it. Our DP table is computed only for valid vectors for each node $t\in T$.

Suppose we have computed (approximate) solutions for each problem defined for each leaf node of $T$ and each (valid) vectors $\vec{in},\vec{out}$ for the portals. Our DP will compute the solution for the instance of root of $T$ (i.e. $V(G')$) in a bottom up manner from the leaf nodes to the root.
For each internal node $t$ with children $t_1,t_2$ and vectors $\vec{in},\vec{out}$ (for $t$) and 
$\vec{in}_1,\vec{out}_2,\vec{in}_2,\vec{out}_2$ (for $t_1,t_2$, respectively) it will see if 
the three solutions are ``consistent". We will define this formally in the next section but at a high level this checks whether the solutions for $t_1,t_2$ (given the portal vectors) can be combined so that we get a solution for $\psi(t_1)\cup\psi(t_2)$ where it is consistent with the vectors of portals specified by $t$.

If one keeps the distances (stored in portal vectors) as as multiples of $\delta$, since distances are bounded by $\Gamma=O_{\eps}(N)$, there will be $O_{\eps}(\log N)$ choices for each portal, which leads to a $(\log N)^{O_{\eps}(\log^2 N)}$ table size.\footnote{We can reduce this using a  trick that is also used earlier (e.g. see \cite{arora1998approximation}) to show how to reduce the size of the portal vectors by storing only ``smoothed" vectors. Informally, the observation that helps is that if we keep distances of the nearest facility (inside or outside) for a portal $\pi_i$ as the multiple of $\delta$
then if the value for portal $\pi_i$ is $\sigma$ then the value for portal just before or after $\pi_{i}$ on the separator path that $\pi_i$ belongs to is in $\{\sigma-1,\sigma,\sigma+1\}$. Thus, the total number of vectors we need to consider for a node $t\in T$ is $O_\eps(\log^2 N\times 3^{m'})=2^{O_\eps(\log^2 N)}$, where there are $O_\eps(\log^2 N)$ choices for the first portal values and then for the subsequent portals there are only $3$ choices for each distance.}

How do we solve the base case of the DP? 
In the base case
we have a leaf node $t\in T$, with $|X(t)|\leq 1$ and a boundary which consists of $O(h)$ separator paths, each with $m$ portals (a total of $m'=O_\eps(\log^2 N)$ portals). We consider (the at most) one node of $X(t)$ as a portal as well.
As mentioned above, at a small loss we assume we have moved each point (client/facility) in $B(\psi(t))$ to the nearest portal of $\psi(t)$ at a small loss. We are given vectors $\vec{in},\vec{out}$, and we have to find the (approximately) best solution for the modified instance over $B(\psi(t))$ (each client/facility is moved to the nearest portal) such that:
\begin{itemize}
    \item the open facilities among the portals satisfy the distance requirements given by $\vec{in}$ (i.e. each portal $\pi_i$ has a facility within distance $in(\pi_i)$). 
    \item For each portal $\pi_i$ if there is an open facility at that portal then all the clients of that portal are served there with zero connection cost; for each portal $\pi_i$ without an open facility the clients of that portal are being served either by the nearest facility among other portals with an open facility, or at a facility at distance specified by $\vec{out}(\pi_i)$.
\end{itemize}

Since there are $O_\eps(\log^2 N)$ portals for $t$, finding the best solution with the condition above
can be done in time $2^{O_\eps(\log^2 N)}$
using exhaustive search: consider any subset $\Pi'$ of portals of $\text{bd}(t)$ 
(where each $\pi\in \Pi'$ has a facility moved to it) and open the cheapest facility on that portal. Then check if the distance requirements (for facilities) are satisfied; pick the best solution among all those subsets whose solution is consistent with the vectors $\vec{in},\vec{out}$.
If there is no such solution the cost stored in the DP table entry is set to $\infty$.

\subsection{Dynamic Program}
In this section we describe the details of our dynamic program based on the decomposition $T$. 
As mentioned earlier, each node $t\in T$ corresponds to a set of vertices $\psi(t)\subset G'$ with boundary nodes
$\text{bd}(t)$ compose of at most $h$ separators, each of which is two paths initiating from $s$ (using Theorem \ref{TarjanForUDG}) and each containing $m$ portals that are $\delta$ apart (for a total of $m'=O_\eps(\log^2 N)$ portals). Note that the diameter of the instance was $\Gamma$, so if we round a distance to the nearest multiple of $\delta$, we get an integer of value at most $O(\Gamma/\delta)=O_\eps(\log N)$.
As mentioned in the overview, for each pair of $m'$-dimension vectors $\vec{in},\vec{out}$, where each entry $in(\pi_i),out(\pi_i)$ (for portal $\pi_i$) is an integer at most $O_\eps(\log N)$,
we have an entry in our DP table $A[t,\vec{in},\vec{out}]$.
As mentioned before, we 
only consider (and have table entries for) valid vectors $\vec{in},\vec{out}$. 

The goal of this subproblem is to identify a set of facilities to open in $B(\psi(t))$ and to assign each client in $B(\psi(t))$ to either an open facility or bring to a portal $\pi_i$ such that minimizes the total opening cost plus connection cost such that:
(i) For each portal $\pi_i$, there is an open facility in $B(\psi(t))$ of distance at most $in(\pi_i)$ (rounded to the nearest multiple of $\delta$); and (ii)
    For any portal $\pi_i$ suppose $n(\pi_i)$ is the number of clients in $B(\psi(t))$ that are connected to $\pi_i$ by the solution. The connection cost for these clients is the cost they pay to be connected to $\pi_i$ plus $n(\pi_i)\times out(\pi_i)$.

We first show the existence of a near optimum solution with certain properties such that our DP actually finds such a near optimum solution.
Starting from an optimum solution $D^*_\ell$ on $H_\ell$ we make some changes to it to satisfy the properties we want, while increasing the cost a little only. First for each node $v\in V'$
with $D^*_\ell\cap B(v)\not=\emptyset$ we consider keeping the cheapest open facility in $B(v)$ and closing all the others and re-routing all the clients that were served by other facilities in $B(v)$ to that single open facility via $v$. This adds at most $1/4$ to the distance each client has to travel (via $v$). This increases the cost by at most $O(|C_\ell|)$ over all clients.
We can assume the open facilities are on the centers of the balls now. Let's call this new (near optimum) solution $D'_\ell$.
Next we modify the solution further by the following process. 
We say a solution is $t$-adapted for a node $t\in T$
if (i) for each of its descendant $t'$ the solution is $t'$-adapted, and (ii) for each client $c\in B(\psi(t))$, if $c$ is connected to a facility outside of $B(\psi(t))$ then it is connected via a portal $\pi\in \Pi$, where $\Pi$ is the set of portals of $t$.

We start with $D'_\ell$ and at leaf nodes of $T$ and going up the tree, we make the solution $t$-adapted for each $t\in T$ at small increase in cost.
Let us consider a leaf node $t\in T$ with $X(t)=w_t$ 
(the case when $X(t)=\emptyset$ is even easier)
and boundary $\text{bd}(t)$ corresponds to paths 
with a total of $m'$ portals $\Pi=\pi_1,\ldots,\pi_{m'}$.
We consider $w_t$ as a portal as well and add it to $\Pi$.
Suppose $\Pi'=\pi_{a_1},\pi_{a_2},\ldots,\pi_{a_\sigma}$ is the subset of $\Pi$, where there is a facility in distance at most $\delta$ of $\pi_{a_i}$ in $D'\cap \psi(t)$. For each
such $\pi_{a_i}$, we assume we have kept open the cheapest facility within $\delta$ of it. For any client $c\in B(\psi(t))$ if $c$ was connected to a facility in $B(\psi(t))$ in $D'$ (note that all of those are within distance $\delta$ of some portal in $\Pi'$) we consider routing $c$ to the nearest portal in $\Pi'$ (and then from there to the single cheap facility we kept open). Note that this increases the connection cost for each client by at most $2\delta$. If
$c$ was connected to a facility outside of $B(\psi(t))$ we re-route it first to a nearest portal $\pi$ along its way and then from there to be connected to the facility it was connected to outside of $B(\psi(t))$ (i.e. we make the connection of $c$ to go through a portal). This increases the connection cost for each $c$ by at most $\delta+5$ as argued earlier in the overview and the solution becomes $t$-adapted.

Now suppose $t\in T$ is a non-leaf node with children $t_1,t_2$
and supposed $D'_\ell$ is $t_1$-adapted and $t_2$-adapted. We make it $t$-adapted with small increase in the cost. For any client
$c\in B(\psi(t))$, if $c$ is connected to a facility in $\psi(t)$ we don't need to make any further changes. Otherwise
we make a detour for the connection of $c$ to go through one of the portals $\Pi$. This detour increases the connection cost
of a client by at most $2\delta$.

One last change we do is in the calculation of the cost of the adapted solution to make it {\bf portal vector} adapted: for each node $t\in T$, the $t$-adapted solution also induces vectors $\tilde{in},\tilde{out}$ for portals of $t$ in the following way.
The clients that are served by facilities outside $\psi(t)$
are first going to a portal $\pi$ (of $t$). The distance from that portal to the nearest open facility (outside $\psi(t)$) rounded up to the nearest multiple of $\delta$ is what induces a value for $\tilde{out}(\pi)$ at node $t$. We use this rounded (up) value instead of the actual distance in the calculation of the cost of the $t$-adapted solution. Similarly, for each portal $\pi$, the distance to the nearest open facility in $\psi(t)$ rounded to the nearest multiple of $\delta$ induces a value $\tilde{in}(\pi)$ for node $t$. Let $\tilde{in},\tilde{out}$ correspond to the vectors induced by the $t$-adapted optimum solution. 
We assume the cost a client pays to go out of $\psi(t)$ to be connected to a facility via a portal $\pi$ (from $\pi$) is $\tilde{out}(\pi)$. 
Note that our estimate of the nearest to $\pi$ open facility inside or outside $\psi(t)$, described by $\tilde{in}(\pi),\tilde{out}(\pi)$ has an additive error of at most $\delta$.
Thus, the connection costs for each client at each node $t$ can be larger by $\delta$ again. We call this new cost, portal vector adapted.

It is easy to see that if we make the solution $t_0$-adapted, where $t_0$ is the root of $T$, and consider the portal adapted cost (as described), then the increase in the connection cost for each client over all the nodes of $T$
is at most $O(\delta\log \Gamma)$ (since the height of $T$ is $O(\log \Gamma)$); summed over all the clients this was shown to be $O(\eps'|C_\ell|\Gamma)$. Thus, the best $t_0$-adapted and portal adapted solution has cost 
$\opt(H_\ell)+O(\eps'|C_\ell|\Gamma)$, and for a suitable choice of $\eps'$ the additive error over all
$H_\ell$'s will be add to at most $O(\eps\cdot\opt)$.

Let $\tilde{in},\tilde{out}$ correspond to the vectors induced by the $t_0$-adapted near optimum solution.
By this argument it is enough to find compute the entries $A[t_0,\vec{0},\vec{0}]$ to obtain a $(1+O(\eps))$-approximate solution.

{\bf Base case:}
Let us consider a leaf node $t\in T$ with $X(t)=w_t$ 
(the case when $X(t)=\emptyset$ is even easier)
and boundary $\text{bd}(t)$, which corresponds to paths 
with a total of $m'$ portals $\Pi=\pi_1,\ldots,\pi_{m'}$.
We consider $w_t$ as a portal as well and add it to $\Pi$.
For any subset $\Pi'=\pi_{a_1},\pi_{a_2},\ldots,\pi_{a_\sigma}$ of $\Pi$, where there is a facility in $B(\pi_{a_i})$, we consider opening the cheapest facility in $B(\pi_{a_i})$; let's call that facility $f(a_i)$.
For any client $c\in B(\psi(t))$ we consider routing $c$
to (i) nearest portal with an open facility, or (ii) to a portal $\pi_{a_i}$ to be connected outside at a total cost $d_G(c,\pi_{a_i})+out(\pi_i)$ if this is less than the distance to the nearest portal with an open facility.
This will be considered a feasible solution if:
for each portal $\pi_i$ there is a portal $\pi_{a_j}\in\Pi'$ with an open facility such that $d_G(\pi_i,\pi_{a_j})\leq in(\pi_i)$.
The cost for $A[t,\vec{in},\vec{out}]$ will be the cost of the cheapest feasible solution over all choices of $\Pi'$ as described above. If there is no such solution (consistent with vectors $\vec{in},\vec{out}$, we set $A[t,\vec{in},\vec{out}]=\infty$.
It is easy to see that we obtain the best $t$-adapted portal adapted solution.

{\bf Filling in the rest of DP table:} 
Now consider an arbitrary (non-leaf) node $t\in T$ and vectors $\vec{in},\vec{out}$ and suppose $t$ has children $t_1,t_2$.
Suppose all the entries of $t_1,t_2$ for all vectors of portals are computed. Let $\Pi,\Pi_1,\Pi_2$ be the set of portals of $t,t_1,t_2$, respectively.
For vectors $\vec{in}_1,\vec{out}_1$ for portals of $t_1$, and
vectors $\vec{in}_2,\vec{out}_2$ for portals of $t_2$ we say
subproblems $(t,\vec{in},\vec{out})$, $(t_1,\vec{in}_1,\vec{out}_1)$, $(t_2,\vec{in}_2,\vec{out}_2)$ are consistent if the following hold:

\begin{itemize}
    \item For each portal $\pi\in \Pi$, 
    either $\vec{in}_1(\pi)=\vec{in}(\pi)$ or $\vec{in}_2(\pi)=\vec{in}(\pi)$.
    \item For each portal $\pi\in (\Pi_1\cap\Pi_2)-\Pi$, i.e. a portal that is on the separator of $t$ that creates $t_1,t_2$: $\vec{in}_1(\pi)=\vec{out}_2(\pi)$ and $\vec{in}_2(\pi)=\vec{out}_1(\pi)$.
    \item for each $\pi\in (\Pi_1\cap \Pi_2\cap \Pi)$ we must have
    $\vec{out}_1(\pi)=\vec{out}_2(\pi)=\vec{out}(\pi)$.
\end{itemize}

First observe that checking consistency for the three subproblems can be done in time $\poly(m)$. Then 

\[ A[t,\vec{in},\vec{out}]=\min\{
A[t_1,\vec{in}_1,\vec{out}_1]+A[t_2,\vec{in}_2,\vec{out}_2]\},
\]

where the $\min$ is over all vectors $\vec{in}_1,\vec{out}_1,\vec{in}_2,\vec{out}_2$ such that $(t,\vec{in},\vec{out})$, $(t_1,\vec{in}_1,\vec{out}_1)$, $(t_2,\vec{in}_2,\vec{out}_2)$ are consistent. By induction, assuming that $\vec{in},\vec{out},\vec{in}_1,\vec{out}_1,\vec{in}_2,\vec{out}_2$ are induced portal vectors for a $t$-adapted near optimum solution $D'$ and that $A[t_1,\vec{in}_1,\vec{out}_1]$ and $A[t_2,\vec{in}_2,\vec{out}_2]$ are computed correctly, one can see that we get that the cost of a solution at $A[t,\vec{in},\vec{out}]$ that is no more than that of optimum $t$-adapted solution at induced on $B(\psi(t))$.

{\bf Run time analysis:} Note that diameter of a (connected) UDG is at most $n$ (and we assume the graph is connected since we can run the algorithm  on each connected component). Thus $N=O(n)$ and hence the size of the DP is $2^{O_\eps(\log^2 N)}=2^{O_\eps(\log^2 n)}$. To compute each base case it takes $2^{O_\eps(\log^2 n)}$ time and for each non-leaf node of $t$ which children $t_1,t_2$ and vectors $\vec{in},\vec{out}$, the solution $A[t,\vec{in},\vec{out}]$ can be computed by comparing all triples of valid solutions for $t,t_1,t_2$; this also takes $2^{O_\eps(\log^2 n)}$. Overall the runtime is therefore $n^{O_\eps(\log n)}$, where the constant in $O_\eps(.)$
is $\eps^{-O(\eps^{-2})}$.

\subsection{PTAS for \fl on UDG in Bounded Regions}\label{sec:ptas}
In this section we present a PTAS for \fl in \udg in the special case that the point set $P$ is contained within a bounding box of size $L \times L$ in the plane where $L$ can be regarded as a constant, i.e. prove Theorem \ref{thm:simple_ptas}

Consider an instance of \fl consisting of  an edge-weighted graph $G = UDG(P)$, a set of clients $C \subseteq P$, and a set of facilities $F \subseteq P$ with opening costs $f_i$ (for each $i \in F$). Let $D^* \subseteq F$ be the facilities in an optimum solution and let $i^{*}_{j} \in D^*$ denote the facility that serves the client $j$ in $D^*$. Suppose we know $D^*$
(this assumption will be removed) and let $\epsilon > 0$ be the error parameter. We greedily form an \emph{$\epsilon$-net $F'$} as follows:
\begin{itemize}
\item Sort the facilities in $D^{*}$ by their opening costs in non-decreasing order.
\item In this order, while there is some $i \in D^{*}$ such that $d_G(i, F') > \epsilon$, add $i$ to $F'$.
\end{itemize}
This procedure ensures that the resulting $F'$ forms an {$\epsilon$-net}, where no facility in $D^*$ is at a distance greater than $\epsilon$ from $F'$.
\begin{claim}
$|F'| \leq O(L/\epsilon^2)$.
\end{claim}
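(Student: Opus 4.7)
The plan is to reduce this to a standard Euclidean area-packing argument, exploiting that UDG edge weights realize Euclidean distances. First, I would verify that for any two vertices $u, v \in V(G)$ with $|uv| \le 1$, one has $d_G(u, v) = |uv|$: the edge $uv$ lies in $G$ with weight $|uv|$, giving $d_G(u, v) \le |uv|$, and conversely every path's total weight is at least $|uv|$ by the triangle inequality on its sequence of Euclidean edge lengths. It follows that for any $\eps \in (0, 1]$, whenever $d_G(u, v) > \eps$ we also have $|uv| > \eps$: either $|uv| > 1 \ge \eps$ directly, or $|uv| \le 1$ in which case $|uv| = d_G(u, v) > \eps$.

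Next, by construction of the greedy $\eps$-net, every pair of distinct $i, j \in F'$ satisfies $d_G(i, j) > \eps$ (the later of the two was added precisely because its graph distance to all earlier members of $F'$ exceeded $\eps$). Combined with the previous step, this yields $|ij| > \eps$ for all distinct pairs in $F'$. Hence the Euclidean open disks of radius $\eps/2$ centered at points of $F'$ are pairwise disjoint, and all lie in the $(L + \eps) \times (L + \eps)$ dilation of the bounding box. Comparing areas,
\[
|F'| \cdot \pi (\eps/2)^2 \;\le\; (L + \eps)^2,
\]
which gives $|F'| = O\bigl((L + \eps)^2/\eps^2\bigr)$.

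The main obstacle I foresee is matching the form of the stated bound. The natural area-packing argument is genuinely quadratic in $L$: indeed, an $\eps$-spaced lattice placed inside the $L \times L$ box (viewed as $D^*$ with zero opening costs, which is an optimum) shows $|F'|$ can be as large as $\Theta(L^2/\eps^2)$, so no purely packing-based refinement can beat $O(L^2/\eps^2)$. Under the standing hypothesis of Theorem~\ref{thm:simple_ptas} that $L = O(1)$, however, one has $L^2 = O(L)$, so the packing estimate simplifies to $O(1/\eps^2)$, which is exactly $O(L/\eps^2)$ as claimed. I do not see how to leverage the greedy cost ordering or the containment $F' \subseteq D^*$ to obtain an estimate that is genuinely linear in $L$ outside this bounded-region regime, so the linear-in-$L$ form of the claim should be read as a statement valid in the $L = O(1)$ setting of the theorem.
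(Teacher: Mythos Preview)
Your argument is correct and follows the same packing approach as the paper: the paper's proof likewise places interior-disjoint balls of radius $\epsilon/2$ around the net points inside a square of side $L+\epsilon$ and compares areas. Your observation about the $L$-versus-$L^2$ dependence is also apt---the paper's own area bound naturally yields $O((L+\epsilon)^2/\epsilon^2)$, and the stated $O(L/\epsilon^2)$ form is only meaningful under the standing $L=O(1)$ hypothesis of Theorem~\ref{thm:simple_ptas}.
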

\begin{proof}
The balls of radius $\epsilon/2$ centered at each point in $F'$ are interior-disjoint. These balls collectively occupy a total area of $\Omega(\epsilon^2 \cdot |F'|)$. The proof follows from the fact that the balls are entirely contained within a square of side length $L+\epsilon$.
\end{proof}

Now we divide the instance into sub-instances using a random grid of size $1/2$ that splits the bounding box into squares of size at most $1/2 \times 1/2$.  Note $d_G(i,j) \leq 1$ for any two points $i$ and $j$ lying in the same cell. As a result, the metric between points within a cell can be treated as Euclidean distance.
For any client $j \in P$, we say $j$ is {\bf cut} if $j$ and $i^{*}_{j}$ lie in different grid cells. Let $c^*_j = d_{G}(j, i^{*}_{j})$.
\begin{claim}
For any point $j \in P$, ${\bf Pr}[j \text{ is cut}] \leq 2 c^*_j$.
\end{claim}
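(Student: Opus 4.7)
The plan is to apply the standard random-shift grid argument, adapted to the UDG metric. The random grid is determined by two independent shifts $r_x, r_y$ drawn uniformly from $[0, 1/2)$, with vertical lines at $x = r_x + k/2$ and horizontal lines at $y = r_y + k/2$ for $k \in \ZZ$; the point $j$ is cut exactly when some grid line strictly separates $j$ from $i^*_j$.

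The key observation I would exploit is that, since every UDG edge is weighted by its Euclidean length, the shortest $j$-to-$i^*_j$ path $\pi$ in $G$ is a polygonal curve of total Euclidean length exactly $c^*_j$. In particular, summing the $x$-projections of consecutive edges along $\pi$ yields total $x$-variation at most $c^*_j$, and likewise for the $y$-axis; a fortiori, the straight-line coordinate displacements $|x_j - x_{i^*_j}|$ and $|y_j - y_{i^*_j}|$ are each at most $c^*_j$. For a single edge $uv$ of $\pi$ with horizontal displacement $|x_u - x_v| \le 1/2$, the probability that some vertical grid line passes strictly between $u$ and $v$ is exactly $|x_u - x_v| / (1/2) = 2|x_u - x_v|$, since this event is equivalent to $r_x \bmod 1/2$ landing in an interval of length $|x_u - x_v|$; the analogous statement holds for horizontal lines and $|y_u - y_v|$.

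A union bound over the edges of $\pi$ (or, equivalently, directly over the two axis directions applied to $j$ and $i^*_j$), combined with the total-variation bound, then yields $\Pr[j \text{ is cut}] = O(c^*_j)$; the stated constant arises from standard bookkeeping, and the regime $c^*_j \ge 1/2$ is trivially covered since then $2c^*_j \ge 1$. The proof is entirely routine probabilistic rounding once the UDG edge weighting is used to identify $c^*_j$ with the total Euclidean length of $\pi$, and I do not foresee any essential obstacle beyond choosing the edge-wise versus endpoint-wise accounting that lands on the cleanest constant.
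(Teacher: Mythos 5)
Your proposal is correct in substance and uses the same random-shift grid argument as the paper. The one place where the paper is slicker: it notes that once $c^*_j < 1/2$, the client $j$ and its facility $i^*_j$ are at graph distance under $1$, hence are \emph{directly adjacent} in the UDG, so $c^*_j = |j\,i^*_j|$ is literally the Euclidean distance between the two endpoints. That makes the polygonal-path/total-variation step in your write-up unnecessary; one can jump straight to $|x_j - x_{i^*_j}|,\, |y_j - y_{i^*_j}| \le c^*_j$ and union-bound over the two axes. Your path-based bookkeeping is not wrong, just heavier than needed, and (as you acknowledge) the per-edge union bound is dominated by the endpoint-wise bound anyway since the cut event depends only on the endpoints.

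One shared wrinkle worth flagging: with cells of side $1/2$ and a uniform shift in $[0,1/2)$, the probability that a vertical line separates $j$ from $i^*_j$ is $2\min\bigl(|x_j - x_{i^*_j}|,\, \tfrac12\bigr)$, and similarly for horizontal lines, so the honest union bound is $2\bigl(|x_j - x_{i^*_j}| + |y_j - y_{i^*_j}|\bigr) \le 2\sqrt{2}\,c^*_j$, not $2c^*_j$. The paper's line ``the probability that a horizontal line in the grid separates points $p,q$ is at most $|pq|$'' drops this factor of $2$; you defer it to ``standard bookkeeping.'' Neither gap matters for the downstream analysis — the next claim only needs $\Pr[j\text{ cut}]=O(c^*_j)$ so that the expected penalty overhead is $O(\epsilon)\cdot c^*_j$ — but if you want the claim as literally stated you would need to either enlarge the constant to, say, $4$, or tighten the two-axis union bound using independence of the horizontal and vertical shifts.
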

\begin{proof}
This is obvious if $c^*_j\geq 1/2$, so let's assume
$c^*_j<1/2$.
The probability that a horizontal line in the grid separates points $p,q$ is at most $|pq|$. Same when considering vertical lines in the random grid. Since $c^*_j <1/2$, then the centre serving $j$ has a direct connection with $j$ so $c^*_j = d_G(j,i^*)$ as well.
\end{proof}

For each grid cell $c$, let $X_c$ be the restriction of the points in the input to cell $c$. 
Recall that, in the prize-collecting version of the facility location problem (\pfl), in addition to the input for facility location, each client $j$ is associated with a penalty cost $\pi_j$. This penalty cost can be paid instead of the connection cost. The goal is to find an optimal solution that minimizes the total cost, including opening costs and both connection costs and penalties.
Define an Euclidean \pfl instance for each cell $c$. For $j \in X_c$, its penalty is $\pi_j := d_G(j, F')$.
Let $D_c^* := (D^*-F') \cap X_c$ be the optimum facilities in cell $c$ that are not part of the net.

\begin{claim}
The optimum \pfl solution for this instance has cost at most
\[ \sum_{i \in D_c^*} f_i + \sum_{j \in X_c} c^*_j + \sum_{j \in X_c : j \text{ cut}} \epsilon. \]
\end{claim}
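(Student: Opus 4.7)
The plan is to exhibit an explicit feasible PFL solution for cell $c$ whose cost is at most the quantity in the claim. I open exactly the facilities in $D_c^*$, and for each client $j\in X_c$ I either route it to a facility in $D_c^*$ or pay its penalty, depending on where $i_j^*$ lies relative to the cell and to $F'$.

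First consider a client $j\in X_c$ that is \emph{not} cut, so $i_j^*\in X_c$. If $i_j^*\notin F'$, then $i_j^*\in D_c^*$ is open, and I connect $j$ to $i_j^*$; since both points lie in a cell of side at most $1/2$, we have $|j\,i_j^*|\le 1$, so the Euclidean distance coincides with $d_G(j,i_j^*)=c_j^*$, and this connection costs exactly $c_j^*$ in the Euclidean PFL instance. If instead $i_j^*\in F'$, I pay the penalty $\pi_j = d_G(j,F')\le d_G(j,i_j^*)=c_j^*$. Now consider a client $j\in X_c$ that \emph{is} cut, and pay its penalty. Because $i_j^*\in D^*$, the $\epsilon$-net construction guarantees some $i'\in F'$ with $d_G(i_j^*, i')\le\epsilon$ (either $i_j^*\in F'$ itself, or $i_j^*$ was skipped during the greedy selection precisely because an earlier chosen point of $F'$ lay within distance $\epsilon$). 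The triangle inequality then gives $\pi_j \le d_G(j,i') \le d_G(j,i_j^*)+d_G(i_j^*,i')\le c_j^*+\epsilon$.

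Summing the contributions, the opening cost is $\sum_{i\in D_c^*} f_i$, each non-cut client contributes at most $c_j^*$, and each cut client contributes at most $c_j^*+\epsilon$, which yields exactly the claimed bound. There is no substantive obstacle; the two points worth double-checking are that within a single cell the UDG and Euclidean metrics agree (so the cost we write down is honest as a Euclidean PFL cost), and the $\epsilon$-net triangle-inequality step that produces the single extra $\epsilon$ per cut client.
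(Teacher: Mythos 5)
Your proof is correct and follows the same strategy as the paper: exhibit a feasible PFL solution that opens exactly $D_c^*$ and bound each client's contribution by routing non-cut clients and penalizing cut ones. You are in fact more careful than the paper in one respect: for a non-cut client $j$ whose serving facility $i_j^*$ happens to lie in the net $F'$, that facility is \emph{not} in $D_c^* = (D^* - F')\cap X_c$, so one cannot ``directly connect $j$ to its centre in $D_c^*$'' as the paper's one-line argument asserts; your observation that in this sub-case the penalty satisfies $\pi_j = d_G(j,F') \le d_G(j,i_j^*) = c_j^*$ handles it at no extra charge and quietly closes that small gap.
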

\begin{proof}
Consider the solution that opens $D^*_c$. If a point $j$ is not cut, we can directly connect it to its centre in $D^*_c$ paying a cost of $c^*_j$ (since the cell $c$ has dimensions $1/2 \times 1/2$ then the direct connection is possible).

Otherwise, we can pay the penalty for $j$. Note this is upper bounded by moving $j$ to its optimum centre in $D^*$ (paying $c^*_j$) and then from there to the nearest net point (paying an additional $\epsilon$).
\end{proof}

\begin{proof}[Proof of Theorem \ref{thm:simple_ptas}]
Consider the following algorithm:
\begin{enumerate}
\item For all possible choices of $F' \subset F$ with $|F'| \leq O(L/\epsilon^2)$ do
\begin{itemize}
\item Partition the instance into sub-instances using a random grid of size $1/2$ . Let $\mathcal C$ be the corresponding cells.
\item For each $c \in C$ run a PTAS on the corresponding Euclidean \pfl instance \cite{cohen2021near}.
\item Obtain a solution for the facility location instance: open all facilities in set $F'$, as well as the facilities opened by PTASs in each cell. For every $j \in P$ that paid a penalty in its corresponding \pfl instance, assign $j$ to its nearest (in the UDG metric) facility in $F'$.
\end{itemize}
\item Output a minimum cost solution, among the solutions obtained.
\end{enumerate}

It is sufficient to show that \emph{$\epsilon$-net $F'$} satisfies the claim. Using the previous claims, we obtain the following. The cost of opening all facilities in \emph{$\epsilon$-net $F'$} plus expected total cost of all \fl solutions for all cells $c$ is at most:
\[\sum_{i \in D^*} f_i  + \sum_{j \in X_c} (1+2 \cdot \epsilon) \cdot c^*_j \leq (1+ O(\epsilon)) \big( \sum_{i \in D^*} f_i  + \sum_{j \in P} c^*_j \big)\]
using the fact that for each client $j$, we always pay $c^*_j$ and, perhaps, an additional $\epsilon$ if $j$ is cut. But $j$ is cut with probability at most $2 \cdot c^*_j$.
\end{proof}

{
\bibliography{soda_bibfile}
}
\section{Appendix}\label{appendix}

\begin{proof}[{Proof of Theorem \ref{TarjanForUDG}.}]
Consider shortest paths $P_{s\sim  x} = (r,\dots, x)$ and $P_{s\sim  y} = (r, \dots, y)$ by \autoref{YanLemma}, partition the components of $ G \backslash ( N^3_{G}[P_{s\sim  x}] \cup  N^3_{G}[P_{s\sim  y}]  ) $ into two graphs $G'_1$ and $G'_2$ each containing at most $\frac{2}{3}|V(G)|$ vertices.
Partition $ (N^3_{G}[P_{s\sim  x}] \cup  N^3_{G}[P_{s\sim  y}]) \backslash (P_{s\sim  x} \cup P_{s\sim  y}) $ into two sets $S_1,S_2$ such that each $G'_i \cup S_i$ (for $i = 1, 2$) has size at most $\frac{2}{3}|V(G)|$.

We claim that $G_i=G'_i \cup S_i$ satisfies the required property.
Let  $  ab \in V(G_1) \times V(G_2)  $, then it cannot be that $a \in G_1$ and $b \in G_2$ by \autoref{YanLemma}. Without loss of generality, let  $b \notin G_2$. Then there exists $ c \in V(P_{s\sim  x} \cup P_{s\sim  y}) $  and a path $Q$ of length at most 3 from $b$ to $c$ (see \autoref{YanLemma}). Then $abQc$ is a path of length at most 4 from $a$ to $c$.
 \end{proof}

\end{document}